\def\RR{{\mathbb R}}
\def\CC{{\mathbb C}}
\def\SS{{\mathbb S}}
\def\A{{\mathcal A}}
\def\D{{\mathcal D}}
\def\H{{\mathcal H}}
\def\P{{\mathcal P}}
\def\R{{\mathcal R}}
\def\a{\alpha}
\def\b{\beta}
\def\g{\gamma}
\def\k{\kappa}
\def\l{\lambda}
\def\L{{\mathrm L}}
\def\R{{\mathrm R}}
\def\1{{\mathbbm 1}}
\def\u1net{{\A^{(0)}}}
\def\diff{{\rm Diff}}
\def\diffs1{\diff(S^1)}
\def\mob{{\rm M\ddot{o}b}}
\def\mob2{{\rm M\ddot{o}b}^{(2)}}
\def\psl2r{{\rm PSL}(2,\RR)}
\def\sl2r{{\rm SL}(2,\RR)}
\def\su11{{\rm SU}(1,1)}
\def\2dmob{{\overline{\psl2r}\times\overline{\psl2r}}}
\def\<{\langle}
\def\>{\rangle}
\def\im{\mathrm{Im}\,}
\def\res{\mathrm{Res}\,}
\def\poincare{{\P^\uparrow_+}}
\def\dom{{\mathrm{Dom}}}
\def\fct{\widetilde{\phi}}
\newtheorem{theorem}{Theorem}[section]
\theoremstyle{remark}
\newtheorem{remark}[theorem]{Remark}
\title{Wedge-local observables in the deformed sine-Gordon model}
\date{}
\author{
{\bf Daniela Cadamuro} \\
e-mail: {\tt daniela.cadamuro@mathematik.uni-goettingen.de}\\
Mathematisches Institut, Universit\"at G\"ottingen \\
Bunsenstrasse 3-5, D-37073 G\"ottingen, Germany.  \\
{\bf Yoh Tanimoto}\\
e-mail: {\tt hoyt@mat.uniroma2.it}\\
Dipartimento di Matematica, Universit\`a di Roma ``Tor Vergata''\\
Via della Ricerca Scientifica 1, 00133 Roma, Italy\\
}
\begin{document}
\maketitle
\begin{abstract}
In the bootstrap approach to integrable quantum field theories in the $(1+1)$-dimensional Minkowski space,
one conjectures the two-particle S-matrix and tries to study local observables.
The massless sine-Gordon model is conjectured to be equivalent to the Thirring model, and
its breather-breather S-matrix components (where the first breather corresponds to the scalar field of the sine-Gordon model)
are closed under fusion.
Yet, the residues of the poles in this breather-breather S-matrix have wrong signs
and cannot be considered as a separate model.

We find CDD factors which adjust the signs, so that the breather-breather S-matrix alone
satisfies reasonable assumptions.
Then we propose candidates for observables in wedge-shaped regions
and prove their commutativity in the weak sense.
\end{abstract}

\begin{center}
{\it
Dedicated to Karl-Henning Rehren on the occasion of his 60th birthday
}
\end{center}

\section{Introduction}\label{introduction}

Recently there have been progresses in the construction of $(1+1)$-dimensional
quantum field theories with factorizing S-matrices in the operator algebraic approach
\cite{Lechner03, Lechner08, DT11, Tanimoto12-2, BT13, LST13, LS14, Tanimoto14-1, Alazzawi14, BT15, AL16}.
The basic idea is the following \cite{Schroer97}:
while pointlike local observables are hard to construct,
observables localized in an infinitely extended wedge-shaped region might be tractable
and have simple expressions.
It has been first implemented for a scalar analytic factorizing S-matrix \cite{Lechner08, Alazzawi14, AL16}
and strictly local observables have been shown to exist
using a quite indirect proof that relies on properties of the underlying modular operators
(for double cones above the minimal size).
In this construction, the input is the particle spectrum of the theory, together with the S-matrix with certain properties.
Construction of observables in wedges has been extended to theories with several particle species
by Lechner and Sch\"utzenhofer \cite{LS14}, including the $O(N)$-invariant nonlinear $\sigma$-models.

Recently, in \cite{CT15-1, CT16-diag, Tanimoto16-1}, we further generalized this construction to scalar models with S-matrices
which have poles in the physical strip. The poles in the S-matrix are believed to correspond to the presence of bound states (e.g.\! the Bullough-Dodd model). We also extended this construction to models with several particle species, where the S-matrix is ``diagonal'' in a certain sense. They include, e.g.\! the $Z(N)$-Ising model and the $A_N$-affine Toda field theories.

In this work, we extend this last mentioned class of S-matrices to those which are modifications
of the S-matrix of the massless sine-Gordon model by a CDD factor, and moreover restricted to a certain range
of values for the coupling constant. This is again of diagonal type. 

The massless sine-Gordon model has been conjectured to be equivalent to the Thirring model in a certain sense (Coleman's equivalence). In \cite{BFM09} Benfatto, Falco and Matropietro proved the equivalence between the massless sine-Gordon model with finite volume interaction and the Thirring model with a finite volume mass term. 
The Thirring model has been also constructed by the functional integral methods \cite{BFM07}.
On the other hand, the massless sine-Gordon model has been expected to be integrable and
its S-matrix has been conjectured \cite{ZZ79}.
Yet, in the rigorous constructions, the factorization of the S-matrix has not been proved
(c.f.\! \cite{BR16}, where the perturbative S-matrix with IR cutoff is shown to converge,
yet its factorization has not been proved).

The conjectured S-matrix of the massless sine-Gordon model has been studied in the \textit{form factor programme} \cite{BFKZ99, BK02}.
Certain matrix components of the pointlike local fields (``form factors'') have been computed, yet
the existence of the Wightman field are currently out of reach, because the expansion of
the $n$-point functions in terms of form factors is not under control.
Here, we are not dealing with the massless sine-Gordon model itself, but with a new model with the same fusion structure,
that has not been considered before. It arises as a deformation of the ``breather-breather'' S-matrix of the massless sine-Gordon model
by the multiplication of a CDD factor.
The coupling constant is restricted here to a certain range of values, where there are only two species of particles involved (two breathers).

Our goal is to attain a realization of this model associated with this new S-matrix in the operator-algebraic framework, i.e.\! the Haag-Kastler axioms.
In this framework, we construct candidates for wedge-local observables by extending the construction carried out in \cite{CT16-diag}.
The question of strong commutativity remains open also in this model.

With the presence of poles in the S-matrix,
the construction of wedge-local observables must be studied in a case-by-case approach,
in contrast to the homogeneous construction for the analytic S-matrices \cite{LS14}.
This is due to the idea that simple poles in the S-matrix correspond to the bound states in the model,
therefore, the wedge-local observables must reflect such fusion processes.
We do this by introducing the operators which we call the bound state operators.
Furthermore, higher order poles bring further complications and we need the existence of
what we call elementary particles. Our proof of wedge-locality is based on a number of properties
of the two-particle scattering function, and there is actually a infinite family of examples satisfying
them, therefore, we have correspondingly an infinite family of candidates for quantum field theories.

The paper is organized as follows. In Sec.\! \ref{model}, we will introduce the model and fix the input scattering data, including the properties of the S-matrix. In Sec.\! \ref{Fock}, we exhibit our general notation for multi-particle Fock space, partially following Lechner-Sch\"utzenhofer \cite{LS14}. 
In Sec.\! \ref{chi} we introduce the bound state operators $\chi(f)$, $\chi'(g)$, we analyse their domains and symmetry properties as quadratic forms.
In Sec.\! \ref{sec:comm} we construct the fields $\fct(f)$ and $\fct'(g)$ and show the weak
wedge-commutativity between the components for ``elementary particles''. In Sec.\! \ref{concluding} we conclude our paper
with some remarks.

\section{The deformed integrable sine-Gordon model}\label{model}
\subsection{The factorizing S-matrix}\label{S-matrix}
In the conjectured integrable sine-Gordon model, the particle spectrum consists of
a family of finitely many particles called \emph{breathers} $\{b_\ell\}$ \cite{BK02}.
It is also conjectured that, the sine-Gordon model is equivalent to the Thirring model, 
where the breathers are the bound states of \emph{soliton} and the \emph{anti-soliton} (the anti-particle of the soliton).

In the sine-Gordon model, the number of breathers depends on the coupling constant $0 < \nu < 1$ in the expression of the Lagrangian \cite{BK02}.
We will consider the coupling constant in the interval $\frac{2}{3}<\nu < \frac45$,
and differently from the sine-Gordon model,
we do not consider solitons and interpret that there are only two breathers $b_1, b_2$,
by taking the \emph{maximal analyticity} (see below) in a strict
sense.\footnote{In the form factor programme \cite{BFKZ99}, for a given $0 < \nu < 1$, there are $K$ breathers,
where $K$ is the largest integer such that $K\nu < 1$.
Especially, if $\frac12 < \nu <1$, there is only one breather $b_1$, differently from our case
(we are indeed not considering the Thirring model).
}
The masses of the breathers are given by $m_{b_\ell} = 2m \sin\frac{\ell\nu\pi}{2}$,
where $m > 0$ and $\ell = 1,2$. These particles are neutral and hence the charge conjugate of $b_\ell$ (denoted with $\bar b_\ell$ in literature)
is $b_\ell$ itself.

In this case, the elastic two-particle scattering processes are characterized by a matrix-valued function with only non-zero components $S^{b_1 b_1}_{b_1 b_1}(\theta)$, $S^{b_1 b_2}_{b_2 b_1}(\theta)$, $S^{b_2 b_1}_{b_1 b_2}(\theta)$ and $S^{b_2 b_2}_{b_2 b_2}(\theta)$, where $\theta$ is the difference of the rapidities of the  incoming particles.
We will give explicit expressions for them in Section \ref{scattering}. They are the breather-breather S-matrix components of the sine-Gordon model
multiplied by so-called CDD factors.

The particles $b_1, b_2$ may form a bound state in a scattering process.
We declare that the possible \emph{fusion processes} are only of three types, $(b_1 b_1) \rightarrow b_2$, $(b_1 b_2) \rightarrow b_1$  and  $(b_2 b_1) \rightarrow b_1$. On the other hand, $(b_2 b_2)$ is not a fusion. 
The corresponding imaginary rapidities of the fusing particles are denoted by $\theta_{(b_1 b_1)}^{b_2}$ for the first fusion, and $\theta_{( b_1 b_2)}^{b_1}$, $\theta_{(b_2 b_1)}^{b_1}$ for the second two types of fusion. Correspondingly, we do not specify the rapidity $\theta_{(b_2b_2)}$, since there is no fusion $(b_2 b_2)$.
The actual values will be given in Section \ref{scattering}.

In the same way as in \cite{CT16-diag}, to these fusion processes there correspond the so-called \emph{fusion angles},
which determine the position of the simple poles in the components $S^{b_1 b_1}_{b_1 b_1}(\zeta)$, $S^{b_2 b_1}_{b_1 b_2}(\zeta)$ and $S^{b_1 b_2}_{b_2 b_1}(\zeta)$ in the \emph{physical strip}
$0< \im \zeta < \pi$.  Specifically, for the fusion $(b_1 b_1) \rightarrow b_2$, $S^{b_1 b_1}_{b_1 b_1}(\zeta)$
has a simple pole at $\zeta = i\theta_{b_1 b_1}^{b_2}$, where 
 \begin{align*}
 \theta_{b_1 b_1}^{b_2} := \theta_{(b_1 b_1)}^{b_2} + \theta_{(b_1 b_1)}^{b_2} \;(=2\theta_{(b_1 b_1)}^{b_2}).
 \end{align*}
Similarly, $S^{b_2 b_1}_{b_1 b_2}(\zeta)$, corresponding to the fusion process $(b_2 b_1) \rightarrow b_1$, has a simple pole at $\zeta = i\theta_{b_1 b_2}^{b_1}$, where 
 \begin{align*}
 \theta_{b_1 b_2}^{b_1} := \theta_{(b_1 b_2)}^{b_1} + \theta_{(b_2 b_1)}^{b_1},
 \end{align*}
and the same holds for the S-matrix component  $S^{b_1 b_2}_{b_2 b_1}(\zeta)$. In our construction,
the poles in the component $S^{b_2 b_2}_{b_2 b_2}(\zeta)$ do not matter.
We will indeed introduce the additional concept of \emph{elementary particle} in analogy with \cite{CT16-diag},
and we assume the so-called ``maximal analyticity'' only for the elementary particle $b_1$.

These angles correspond to \textbf{$s$-channel poles} and in the model under investigation they are explicitly given in Table~\ref{Table}.
The S-matrix components $S^{b_1b_1}_{b_1 b_1}$, $S^{b_2 b_1}_{b_1 b_2}$, $S^{b_1b_2}_{b_2 b_1}$ and $S^{b_2 b_2}_{b_2 b_2}$ are meromorphic functions on $\CC$, which we present below.
In addition, we will introduce the bound state intertwiners
$\eta^{b_2}_{b_1 b_1}$, $\eta^{b_1}_{b_2 b_1}$ and $\eta^{b_1}_{b_1 b_2}$ (there is no corresponding matrix element for $(b_2 b_2)$, as this is not a fusion.)
In a general non-diagonal case, they formally diagonalize the S-matrix components above at the corresponding pole,
and their eigenvalues correspond to the residues.
They were also defined in \cite{BFKZ99} and more explicitly in \cite[before Eq.(1.13)]{Quella99}
and here we adopt a slightly modified convention, as below.

\subsection{Scattering data}\label{scattering}
The input which specifies the S-matrix of our model is the following.

\begin{itemize}
 \item The {\bf coupling constant} $\nu$, which is a parameter such that $\frac{2}{3}< \nu < \frac45$
 and the {\bf mass parameter $m > 0$} which determines the masses of the breathers (see below).
 For the value of $\nu$ in the range above, we consider two breathers, $b_1, b_2$.
 Indeed, $K=2$ is the largest integer such that $K\nu < 2$. 
 \item The S-matrix components:
 $S_{b_k b_\ell}^{b_\ell b_k}(\zeta) =S_{\text{SG}}\phantom{}_{b_k b_\ell}^{b_\ell b_k}(\zeta) S_{\text{CDD}}\phantom{}_{b_k b_\ell}^{b_\ell b_k}(\zeta)$,
where
\begin{align*}
S_{\text{SG}}\phantom{}^{b_1 b_1}_{b_1 b_1}(\zeta) &= \frac{\tan \frac{1}{2i}\left(  \zeta + i\pi \nu \right)}{\tan \frac{1}{2i} \left( \zeta -i\pi \nu  \right)}, \\ 
S_{\text{SG}}\phantom{}^{b_2 b_1}_{b_1 b_2}(\zeta) =S_{\text{SG}}\phantom{}^{b_1 b_2}_{b_2 b_1}(\zeta) &= \frac{\tan \frac{1}{2i}\left(  \zeta + \frac{3i\pi \nu}2 \right)}{\tan \frac{1}{2i} \left( \zeta -\frac{3i\pi \nu}2  \right)}\frac{\tan \frac{1}{2i}\left(  \zeta + \frac{i\pi \nu}2 \right)}{\tan \frac{1}{2i} \left( \zeta -\frac{i\pi \nu}2  \right)}, \\
S_{\text{SG}}\phantom{}^{b_2 b_2}_{b_2 b_2}(\zeta) &= \frac{\tan \frac{1}{2i}\left(  \zeta + 2i\pi \nu \right)}{\tan \frac{1}{2i} \left( \zeta -2i\pi \nu  \right)}\frac{\left( \tan \frac{1}{2i}\left(  \zeta + i\pi \nu \right) \right)^2}{\left( \tan \frac{1}{2i} \left( \zeta -i\pi \nu  \right) \right)^2}. 
\end{align*}
are the breather-breather S-matrix components of the sine-Gordon model (see \cite{BK02, Quella99}),
and $S_{\text{CDD}}\phantom{}_{b_k b_\ell}^{b_\ell b_k}$ are introduced as follows:
\begin{align}\label{Sb1}
 S_{\text{CDD}}\phantom{}_{b_1 b_1}^{b_1 b_1}(\zeta) &:=
 \frac{\sinh\frac12\left(\theta - i\pi(\nu - \nu_-)\right)}{\sinh\frac12\left(\theta + i\pi(\nu - \nu_-)\right)}
 \cdot \frac{\sinh\frac12\left(\theta - i\pi(\nu + \nu_+)\right)}{\sinh\frac12\left(\theta + i\pi(\nu + \nu_+)\right)} \nonumber \\
 &\qquad \times \frac{\sinh\frac12\left(\theta - i\pi(1- \nu + \nu_-)\right)}{\sinh\frac12\left(\theta + i\pi(1- \nu + \nu_-)\right)}
 \cdot \frac{\sinh\frac12\left(\theta - i\pi(1- \nu - \nu_+)\right)}{\sinh\frac12\left(\theta + i\pi(1- \nu - \nu_+)\right)},
\end{align}
and expecting the bootstrap equation (see condition  \ref{bootstrap} below), we also define
\begin{align}\label{Sb2}
 S_{\text{CDD}}\phantom{}_{b_1 b_2}^{b_2 b_1}(\zeta) &= S_{\text{CDD}}\phantom{}_{b_2 b_1}^{b_1 b_2}(\zeta) \nonumber \\
 &:= S_{\text{CDD}}\phantom{}_{b_1 b_1}^{b_1 b_1}(\zeta + i\theta_{(b_1b_1)}^{b_2})S_{\text{CDD}}\phantom{}_{b_1 b_1}^{b_1 b_1}(\zeta - i\theta_{(b_1b_1)}^{b_2})\nonumber \\
 &=\frac{\sinh\frac12\left(\theta - i\pi(\frac32\nu - \nu_-)\right)}{\sinh\frac12\left(\theta + i\pi(\frac12\nu - \nu_-)\right)}
 \cdot \frac{\sinh\frac12\left(\theta - i\pi(\frac32\nu + \nu_+)\right)}{\sinh\frac12\left(\theta + i\pi(\frac12\nu + \nu_+)\right)}\nonumber \\
 &\qquad \times \frac{\sinh\frac12\left(\theta - i\pi(1- \frac12\nu + \nu_-)\right)}{\sinh\frac12\left(\theta + i\pi(1- \frac32\nu + \nu_-)\right)}
 \cdot \frac{\sinh\frac12\left(\theta - i\pi(1- \frac12\nu - \nu_+)\right)}{\sinh\frac12\left(\theta + i\pi(1- \frac32\nu - \nu_+)\right)} \nonumber \\
 &\qquad \times \frac{\sinh\frac12\left(\theta - i\pi(\frac12\nu - \nu_-)\right)}{\sinh\frac12\left(\theta + i\pi(\frac32\nu - \nu_-)\right)}
 \cdot \frac{\sinh\frac12\left(\theta - i\pi(\frac12\nu + \nu_+)\right)}{\sinh\frac12\left(\theta + i\pi(\frac32\nu + \nu_+)\right)} \nonumber \\
 &\qquad \times \frac{\sinh\frac12\left(\theta - i\pi(1- \frac32\nu + \nu_-)\right)}{\sinh\frac12\left(\theta + i\pi(1- \frac12\nu + \nu_-)\right)}
 \cdot \frac{\sinh\frac12\left(\theta - i\pi(1- \frac32\nu - \nu_+)\right)}{\sinh\frac12\left(\theta + i\pi(1- \frac12\nu - \nu_+)\right)},  \\
  S_{\text{CDD}}\phantom{}_{b_2 b_2}^{b_2 b_2}(\zeta)
 &:= S_{\text{CDD}}\phantom{}_{b_1 b_2}^{b_2 b_1}(\zeta + i\theta_{(b_1b_1)}^{b_2})S_{\text{CDD}}\phantom{}_{b_1 b_2}^{b_2 b_1}(\zeta - i\theta_{(b_1b_1)}^{b_2}) \nonumber \\
 &= S_{\text{CDD}}\phantom{}_{b_1 b_1}^{b_1 b_1}(\zeta + i\theta_{b_1b_1}^{b_2})S_{\text{CDD}}\phantom{}_{b_1 b_1}^{b_1 b_1}(\zeta)^2S_{\text{CDD}}\phantom{}_{b_1 b_1}^{b_1 b_1}(\zeta - i\theta_{b_1b_1}^{b_2}).\nonumber
\end{align}
We do not need an explicit expression for $S_{\text{CDD}}\phantom{}_{b_2 b_2}^{b_2 b_2}$, and we omit computing it.
$\nu_-$ and $\nu_+$ are parameters satisfying the following set of conditions:
\begin{enumerate}[(i)]
 \item $ \nu_-, \nu_+ >0 $.\label{i}
 \item $\nu_- \in \left( \frac{3}{2}\nu -1, \frac{1}{2}\nu  \right)$.\label{ii}
 \item $\nu_+ \in (0, 1-\nu)$.\label{iii}
 \item $1-\nu = \nu_-+\nu_+$.\label{iv}
\end{enumerate}
For $\frac23 < \nu < \frac45$, there are such $\nu_-,\nu_+$.
Indeed, by rewriting every condition \eqref{i}--\eqref{iii} only in terms of $\nu_+$
through \eqref{iv} which is equivalent to $\nu_- = 1-\nu -\nu_+$,
we obtain $0 < \nu_+ < 1-\nu$ and $1-\frac32 \nu < \nu_+ < 2-\frac52\nu$,
which have always a nontrivial intersection for $\frac23 < \nu < \frac45$
(on the other hand, in the interval $\frac45 < \nu < 1$ there is no such intersection).

Let us take such $\nu_-, \nu_+$.
\eqref{iv} is equivalent to $-\left(\frac{1}{2}\nu - \nu_- \right) = 1-\frac{3}{2}\nu - \nu_+$,
therefore, from \eqref{Sb2} and $\sinh \frac12(\zeta + 2\pi i) = -\sinh\frac12\zeta$ we have
\begin{align*}
 S_{\text{CDD}}\phantom{}_{b_1 b_2}^{b_2 b_1}(\zeta) &= S_{\text{CDD}}\phantom{}_{b_2 b_1}^{b_1 b_2}(\zeta) \nonumber \\
&=
 \frac{\sinh\frac12\left(\theta - i\pi(\frac32\nu - \nu_-\right))}{\sinh\frac12\left(\theta + i\pi(\frac12\nu - \nu_-\right))}
 \cdot \frac{\sinh\frac12\left(\theta - i\pi(\frac32\nu + \nu_+\right))}{\sinh\frac12\left(\theta + i\pi(\frac12\nu + \nu_+\right))} \nonumber \\
 &\qquad \times \frac{-1}{\sinh\frac12\left(\theta + i\pi(1- \frac32\nu + \nu_-\right))}
 \cdot \frac{\sinh\frac12\left(\theta - i\pi(1- \frac12\nu - \nu_+\right))}{1} \nonumber \\
 &\qquad \times \frac{1}{\sinh\frac12\left(\theta + i\pi(\frac32\nu - \nu_-\right))}
 \cdot \frac{\sinh\frac12\left(\theta - i\pi(\frac12\nu + \nu_+\right))}{1} \nonumber\\
 &\qquad \times \frac{\sinh\frac12\left(\theta - i\pi(1- \frac32\nu + \nu_-\right))}{\sinh\frac12\left(\theta + i\pi(1- \frac12\nu + \nu_-\right))}
 \cdot \frac{\sinh\frac12\left(\theta - i\pi(1- \frac32\nu - \nu_+\right))}{\sinh\frac12\left(\theta + i\pi(1- \frac12\nu - \nu_+\right))}, \\
\end{align*}
and it is straightforward to see that these S-matrix components have no pole in the physical strip
$0 < \im \zeta < \pi$.

 \item There are only three possible fusion processes $(b_1 b_1) \rightarrow b_2$,  $(b_2 b_1) \rightarrow b_1$ and $(b_1 b_2) \rightarrow b_1$. Note that $(b_2 b_2)$ is not a fusion.  The corresponding rapidities of particles $\theta_{(b_1 b_1)}^{b_2}$, $\theta_{(b_1 b_2)}^{b_1}$ and $\theta_{(b_2 b_1)}^{b_1}$ are presented in the fusion table (Table \ref{Table}).
We define the fusion angles by $\theta_{\a \b}^{\g} := \theta_{(\a \b)}^{\g} +\theta_{(\b \a)}^{\g}$
if $(\a\b) \to \g$ is a two fusion process, where $\a,\b,\g = b_1$ or $b_2$.

\begin{table}
\begin{tabular}{|c|c|c|}
 \hline
 processes & rapidities of particles & fusion angles \\
 \hline
 $(b_1 b_1) \longrightarrow b_2 $ & $\theta_{(b_1 b_1)}^{b_2} = \frac{\pi \nu }2$ & $\theta_{b_1 b_1}^{b_2} = \pi \nu$ \\
 \hline
 $(b_2 b_1) \longrightarrow b_1, (b_1 b_2) \longrightarrow b_1$ &
$\theta_{(b_1 b_2)}^{b_1} =\pi(1 - \nu), \theta_{(b_2 b_1)}^{b_1} = \frac{\pi \nu}2$ & $\theta_{b_2 b_1}^{b_1} = \theta_{b_1 b_2}^{b_1} = \pi \left( 1-\frac{\nu}{2} \right)$ \\
 \hline
 $(b_2 b_2)$ not a fusion & &   \\
 \hline
\end{tabular}
\caption{Fusions and angles}\label{Table}
\end{table}

The data collected above satisfy the following ``axioms''
(in general, these axioms involve the charge conjugation, but for breathers it is trivial, $\bar b_1 = b_1$ and $\bar b_2 = b_2$).
In the following, $k, \ell=1,2$.
\begin{enumerate}
  \renewcommand{\theenumi}{(S\arabic{enumi})}
  \renewcommand{\labelenumi}{\theenumi}
 \item \label{meromorphic} {\bf Meromorphy.} The functions $S^{b_\ell b_k}_{b_k b_\ell}(\zeta)$ are meromorphic on $\mathbb{C}$.
 \item \label{parity} {\bf Parity symmetry.} $S^{b_\ell b_k}_{b_k b_\ell}(\zeta) = S^{b_k b_\ell}_{b_\ell b_k}(\zeta)$.
 \item \label{unitarity} {\bf Unitarity.} $S^{b_\ell b_k}_{b_k b_\ell}(\zeta)^{-1} = \overline{S^{b_k b_\ell}_{b_\ell b_k}(\bar\zeta)}$.
 \item \label{hermitian} {\bf Hermitian analyticity.} $S^{b_\ell b_k}_{b_k b_\ell}(\zeta) = S^{b_\ell b_k}_{b_k b_\ell}(-\zeta)^{-1}$.
 \item \label{crossing} {\bf Crossing symmetry.} $S^{b_\ell b_k}_{b_k b_\ell}(i\pi -\zeta) = S^{b_k b_\ell}_{b_\ell b_k}(\zeta)$.
 \item \label{bootstrap} {\bf Bootstrap equation.} 
 Let $\a,\b,\g,\mu = b_1$ or $b_2$.
 If $(\alpha \beta) \rightarrow \gamma$ is a fusion process in Table \ref{Table},
 there holds
 \begin{equation*}
 S^{\gamma \mu}_{\mu \gamma}(\zeta) = S^{\alpha \mu}_{\mu \alpha}(\zeta +i\theta_{(\alpha \beta)}^\gamma) S^{\beta \mu}_{\mu \beta}(\zeta -i\theta_{(\beta \alpha)}^\gamma).
 \end{equation*}
  
 \item \label{atzero} {\bf Value at zero.} $S^{b_k b_k}_{b_k b_k}(0) = -1$.
 \item \label{regularity}  \textbf{Regularity.} The components $S^{b_\ell b_k}_{b_k b_\ell}$ have only finitely many zeros in the physical strip and
 there is $\k > 0$ such that $\|S\|_\k := \sup\left\{|S_{b_k b_\ell}^{b_\ell b_k}(\zeta)|: \zeta \in \RR + i(-\k,\k) \right\} < \infty$ (the value of $\k$ depends on
 the parameters $\nu, \nu_-, \nu_+$).
 \item \label{maximal} \textbf{Maximal analyticity (for $b_1$).}\footnote{We call this ``maximal analyticity''
 because each $s$-channel pole at $i\theta_{b_1 b_k}^{b_\ell}$ has a corresponding
 entry $(b_1 b_k)\to b_\ell$ in the fusion Table \ref{Table}.
 It should be noted that this is required only for the S-matrix components containing $b_1$, the ``elementary particle'' defined below.}
 The component $S^{b_1 b_1}_{b_1 b_1}(\zeta)$ has only two simple poles in the physical strip. They are at
 $i\theta_{b_1 b_1}^{b_2} = i\pi \nu$ (called $s$-channel pole) and $i\theta_{b_1 b_1}^{\prime b_2} := i\pi - i\theta^{b_2}_{b_1 b_1} = i\pi(1 - \nu)$
(called $t$-channel pole, whose existence follows from crossing symmetry). Similarly, the component $S^{b_2 b_1}_{b_1 b_2}(\zeta)$ has also only
two simple poles, i.e. an $s$-channel pole at $i\theta_{b_1 b_2}^{b_1} = i\pi \left( 1- \frac{ \nu}{2}\right)$ and a $t$-channel pole at $i\theta_{b_1 b_2}^{\prime b_1} := i\pi - i\theta^{b_1}_{b_2 b_1} = \frac{i\pi \nu}{2}$.
 
 Furthermore, $S^{b_k b_1}_{b_1 b_k}$ have no double or higher poles in the physical strip, $k=1,2$.
 \item \label{positiveresidue} {\bf Positive residue (for $b_1)$.}
 If $(b_1 b_k) \to b_\ell$ is a fusion process, then
 \[
  R_{b_1 b_k}^{b_\ell} := \res_{\zeta = i\theta_{b_1 b_k}^{b_\ell}} S_{b_1 b_k}^{b_k b_1}(\zeta) \in i\RR_+.
 \]
 
\end{enumerate}

\end{itemize}

\paragraph{Proof of the axioms.}
\begin{itemize}
\item \ref{meromorphic}--\ref{bootstrap} and \ref{regularity}.
These properties are already satisfied by the S-matrix with components ${S_{\mathrm{SG}}}_{b_k b_\ell}^{b_\ell b_k}$
of the sine-Gordon model (and well-known in the literature).
It is also straightforward to check that $S_{\text{CDD}}\phantom{}_{b_k b_\ell}^{b_\ell b_k}(\zeta)$ satisfy
\ref{meromorphic}--\ref{crossing} and \ref{regularity}.
As for \ref{bootstrap}, we have by construction
\[
 S_{\text{CDD}}\phantom{}_{b_1 b_2}^{b_2 b_1}(\zeta) = S_{\text{CDD}}\phantom{}_{b_1 b_1}^{b_1 b_1}\left(\zeta + \frac{i\pi \nu}2\right) S_{\text{CDD}}\phantom{}_{b_1 b_1}^{b_1 b_1}\left(\zeta - \frac{i\pi \nu}2\right).
\]
By the properties mentioned above (in particular, hermitian analyticity), we have
\begin{align*}
 S_{\text{CDD}}\phantom{}_{b_1 b_1}^{b_1 b_1}(\zeta) = S_{\text{CDD}}\phantom{}_{b_1 b_2}^{b_2 b_1}\left(\zeta + \frac{i\pi \nu}2\right) S_{\text{CDD}}\phantom{}_{b_1 b_1}^{b_1 b_1}\left(\zeta - i\pi (1 - \nu)\right)
\end{align*}
Similarly, the bootstrap for $b_2$ can be satisfied by construction.

Therefore, the products $S\phantom{}_{b_k b_\ell}^{b_\ell b_k}(\zeta) = S_{\text{SG}}\phantom{}_{b_k b_\ell}^{b_\ell b_k}(\zeta)S_{\text{CDD}}\phantom{}_{b_k b_\ell}^{b_\ell b_k}(\zeta)$
satisfy them as well.
\item \ref{atzero}. It is easy to see that
$S_{\text{SG}}\phantom{}_{b_k b_k}^{b_k b_k}(0) = -1$, while $S_{\text{CDD}}\phantom{}_{b_k b_k}^{b_k b_k}(0) = 1$,
therefore, we have $S_{\text{CDD}}\phantom{}_{b_k b_k}^{b_k b_k}(0) = -1$.
\item \ref{maximal}. The expression of $S_{\text{CDD}}\phantom{}_{b_1 b_k}^{b_k b_1 }(\zeta)$ does not have poles in the physical strip,
so the pole structure of $S_{b_1 b_k}^{b_k b_1}(\zeta)$ is determined by $S_{\text{SG}}\phantom{}_{b_1 b_k}^{b_k b_1 }(\zeta)$,
which is easy to check (and known in the literature).
\item \ref{positiveresidue} is violated in the sine-Gordon model,
indeed 
\[
R_{\text{SG}}\phantom{}^{b_2}_{b_1 b_1} := \res_{\zeta = i\theta_{b_1 b_1}^{b_2}} {S_\text{SG}}_{b_1 b_1}^{b_1 b_1}(\zeta) = 2i \tan(\pi \nu) \in -i\RR_+,
\]
for our range of $\nu \in (\frac23, \frac45)$.

On the other hand, by counting the zeros on the imaginary line
and by recalling that $S_{\text{CDD}}\phantom{}_{b_1 b_1}^{b_1 b_1}(0) = 1$,
one can see that $S_{\text{CDD}}\phantom{}_{b_1 b_1}^{b_1 b_1}(i\pi \nu) < 0$,
hence we obtain $R_{b_1 b_1}^{b_2} = \res_{\zeta = i\pi \nu} S_{b_1 b_1}^{b_1 b_1}(\zeta) \in i\RR_+$ as desired.
From this it follows that $R_{b_1 b_2}^{b_1} \in i\RR_+$ as well, since we will see below that $R_{b_1 b_2}^{b_1} = R_{b_1 b_1}^{b_2}$.
\end{itemize}

The residues of $S^{b_k b_1}_{b_1 b_k}(\zeta)$ will play an important role,
so we give them symbols.
\begin{align*}
 R^{b_\ell}_{b_1 b_k} := \operatorname{Res}_{\zeta =i\theta_{b_1 b_k}^{b_\ell}} S_{b_1 b_k}^{b_k b_1}(\zeta),
\quad R^{\prime b_\ell}_{b_1 b_k} := \operatorname{Res}_{\zeta =i\theta_{b_1 b_k}^{\prime b_\ell}} S_{b_1 b_k}^{b_k b_1}(\zeta) \\
 R^{b_\ell}_{b_k b_1} := \operatorname{Res}_{\zeta =i\theta_{b_k b_1}^{b_\ell}} S_{b_k b_1}^{b_1 b_k}(\zeta),
\quad R^{\prime b_\ell}_{b_k b_1} := \operatorname{Res}_{\zeta =i\theta_{b_k b_1}^{\prime b_\ell}} S_{b_k b_1}^{b_1 b_k}(\zeta)
 \end{align*}
and it follows that $R^{b_\ell}_{b_1 b_k} = R^{b_\ell}_{b_k b_1}$.

As before, we also introduce\footnote{We use a slightly different convention from \cite{Quella99}: For a fusion process $(\alpha \beta) \rightarrow \gamma$, we have $\eta^{\g}_{\a \b} = \sqrt{2\pi}\,\eta^{\gamma}_{\alpha \beta}\phantom{}^{(\text{Quella})}$.}
the symbols $\eta^{b_2}_{b_1 b_1}$ and $\eta^{b_1}_{b_2 b_1}$ by the following formula:
\begin{equation}\label{Reseta}
\eta^{b_2}_{b_1b_1} = i\sqrt{2 \pi \left|R^{b_2}_{b_1b_1}\right|},
\quad  \eta^{b_1}_{b_1 b_2} = i\sqrt{2 \pi \left|R^{b_1}_{b_1 b_2}\right|},
\quad  \eta^{b_1}_{b_2 b_1} = i\sqrt{2 \pi \left|R^{b_1}_{b_2 b_1}\right|}.
\end{equation}
Furthermore, by convention, we set to zero any residues and matrix elements of the above type which do \textit{not} correspond to a fusion in Table \ref{Table}.
From the properties \ref{parity}--\ref{atzero} of the S-matrix,
there is a number of other properties of the fusion angles and of the residues that follow,
and we refer for the proofs to \cite[Sec.\! 2.1]{CT16-diag}. We would mention here only the following.
The residue of the \emph{t-channel} pole is related to the residue of the s-channel pole by $R^{\prime b_2}_{b_1 b_1} = -  R_{ b_1 b_1}^{b_2}$ and $R^{\prime b_1}_{b_2 b_1} = -  R_{ b_1 b_2}^{b_1}$, and that by \ref{parity}, $R_{b_1 b_2}^{b_1} = R_{b_2 b_1}^{b_1}$. \ref{bootstrap} and \ref{atzero} imply that $R_{b_1 b_1}^{b_2} = R_{b_1 b_2}^{b_1}$.
Furthermore, if $(b_1 b_k) \to b_\ell$ is a fusion process, the fusion angles are also related by
\begin{equation}\label{prop:angles}
\pi - \theta^{b_\ell}_{b_1 b_k} = \theta^{b_k}_{( b_1 b_\ell)}, \quad  \theta^{b_k}_{(b_\ell b_1)} = \theta^{b_\ell}_{(b_k b_1)}.
\end{equation}

From the equality $R_{b_1 b_1}^{b_2} = R_{b_1 b_2}^{b_1}$ and the parity $R_{b_1 b_2}^{b_1} = R_{b_2 b_1}^{b_1}$, 
it also holds that
$\eta^{b_2}_{b_1 b_1} =\eta^{ b_1}_{b_1 b_2} = \eta^{ b_1}_{ b_2 b_1}$. 

\paragraph{Particle spectrum.}
Given the mass parameter $m > 0$, we define the masses of the particles as
\[
 m_{b_1} = 2m \sin\frac{\nu\pi}{2}, \quad
 m_{b_2} = 2m \sin\frac{2\nu\pi}{2}.
\]
They satisfy the following ``fusion'' rule:
\begin{align}\label{eq:massfusion}
 m_{b_2} = m_{b_1} \cos\theta^{b_2}_{(b_1 b_1)} + m_{b_1} \cos\theta^{b_2}_{(b_1 b_1)}, \quad
 m_{b_1} = m_{b_1} \cos\theta^{b_1}_{(b_1 b_2)} + m_{b_2} \cos\theta^{b_1}_{(b_2 b_1)}.
\end{align}

As $b_1$ plays a special role in our methods,
we call it an {\bf elementary particle} as in \cite[Sec.\! 2.1]{CT16-diag}.

\section{The physical Hilbert space}\label{Fock}

From the scattering data of Section \ref{scattering}, we construct basic mathematical structures
for the wedge-observables in the quantum field theory on the $S$-symmetric Fock space.
The construction can be thought of as a kind of deformation of a free field theory with the input given by the S-matrix.
The single-particle Hilbert space accommodates the two species of particles:
\begin{align*}
\mathcal{H}_1 =  \displaystyle{\bigoplus_{k=1,2}}  \mathcal{H}_{1,b_k}, \quad \mathcal{H}_{1,b_k}= L^2(\mathbb{R},d\theta).
\end{align*}
An element $\Psi_1 \in \mathcal{H}_1$ can be identified as a vector valued function with components $\theta \mapsto \Psi_1^{b_k}(\theta)$. 
On the unsymmetrized $n$-particle space $\H_1^{\otimes n}$, there is a unitary representation
$D_n$ of the symmetric group $\mathfrak{G}_n$ which, with $\pmb{\theta} := (\theta_1,\cdots, \theta_n)$, acts as
\begin{equation*}
  (D_n(\tau_j)\Psi_n)^{\pmb{b_k}}(\pmb{\theta}) = S_{ b_{k_{j+1}}  b_{k_j} }^{ b_{k_j} b_{k_{j+1}}}(\theta_{j+1}-\theta_j)\Psi^{b_{k_1}\cdots b_{k_{j+1}} b_{k_j}\cdots b_{k_n}}_n(\theta_1,\cdots, \theta_{j+1},\theta_j,\cdots,\theta_n),
\end{equation*}
where $k_1, \ldots, k_n \in \{ 1,2 \}$, $\pmb{\theta} := (\theta_1 , \ldots , \theta_n)$, $\pmb{b_k} := (b_{k_1}, \ldots, b_{k_n} )$ and $\tau_j \in \mathfrak{G}_n$ is the transposition $(j, j+1) \rightarrow (j+1, j)$.

The full Hilbert space $\mathcal{H}$ is
$\mathcal{H} := \bigoplus_{n=0}^\infty \mathcal{H}_n$ with $\mathcal{H}_0 =\mathbb{C}\Omega$,
where $\mathcal{H}_n = P_n \mathcal{H}_1 ^{\otimes n}$ and $P_n := \frac{1}{n!}\sum_{\sigma \in \mathfrak{G}_n} D_n(\sigma)$ is an orthogonal projection.
The elements of $\mathcal{H}$ are $L^2$-sequences $\Psi = (\Psi_0, \Psi_1, \ldots)$, where $\Psi_n$ are \emph{$S$-symmetric functions}, namely invariant under the action of $\mathfrak{G}_n$. 
Finally, we denote by $\mathcal{D}$ the linear hull (without closure) of $\{\H_n\}$.  

There is a unitary representation $U$ of the proper orthochronous Poincar\'e group $\poincare$ on $\H$
which preserves each $\H_n$,
\begin{align*}
U := \bigoplus_n U_n, \quad (U(a,\lambda)\Psi)^{\pmb{b_k}}_n(\pmb{\theta}) := \exp \left(i \sum_{l=1}^n p_{b_{k_l}}(\theta_l) \cdot a\right) \Psi^{\pmb{b_k}}_n(\theta_1 - \l, \cdots, \theta_n - \l),
\end{align*}
where $p_{b_{k_l}}(\theta)= (m_{b_{k_l}}\cosh \theta, m_{b_{k_l}}\sinh \theta)$.
Additionally, there is an antiunitary representation of the CPT operator on $\mathcal{H}$:
\begin{equation*}
J := \bigoplus_n J_n, \quad (J \Psi)_n^{\pmb{b_k}}(\pmb{\theta}) :=  \overline{\Psi_n^{b_{k_n} \ldots b_{k_1}} ( \theta_n, \ldots,   \theta_1)}.
\end{equation*}
We consider test functions with multi-components and are chosen as $g \in \bigoplus_{k=1}^2 \mathscr{S}(\mathbb{R}^2)$ with $g_{b_k} \in \mathscr{S}(\mathbb{R}^2)$, and we adopt the following convention:
\begin{equation*}
g^{\pm}_{b_k}(\theta) := \frac{1}{2\pi} \int d^2 x\, g_{b_k} (x) e^{\pm i p_{b_k} (\theta)\cdot x}.
\end{equation*}
We note that\footnote{Our convention of the Lorentz metric is $a \cdot b = a_0b_0-a_1b_1$.}
if $g_{b_k}$ is supported in $W_\R$, then $g^+_{b_k}(\theta)$ has a bounded analytic continuation in $\RR +i(-\pi, 0)$ and $|g^+_{b_k}(\theta +i\lambda)|$ decays rapidly as $\theta \rightarrow \pm \infty$ in the strip for $\l \in (-\pi,0)$. Moreover, $g^+_{b_k}(\theta -i\pi)= g^-_{b_k}(\theta)$.

There is a natural action of the proper Poincar\'e group on $\RR^2$ and on the space of test functions,
denoted by $g_{(a,\lambda)}$,
and it is compatible with the action on the one-particle space:
\[
 (g_{(a, \lambda)})^\pm_{b_k} = U_1(a,\lambda)g^\pm_{b_k}.
\]

The CPT transformation acts also on multi-components test functions, which we denote by $j$,
as $g \mapsto g_j$, $(g_j)_{b_k} (x) := \overline{g_{ b_k }(-x)}$,
and this is again compatible with $J_1$:
$(g_j)^\pm_{b_k} (\theta)= J_1g^\pm_{b_k} (\theta) = \overline{g^{\pm}_{b_k}(\theta)}$.

Moreover, we introduce the complex conjugate of a multi-component test function by
$(g^*)_{b_k} (x) :=\overline{g_{b_k}(x)}$ and if $g=g^*$, then 
we say that $g$ is real and it follows that $\overline{g^\pm_{b_k}(\overline \zeta)} = g^\mp_{b_k}(\zeta)$
(c.f.\! \cite[Proposition 3.1]{LS14}).

\subsubsection*{Zamolodchikov-Faddeev algebra}

Similarly to \cite{LS14}, creators and annihilators $z^\dagger_{b_k} (\theta), z_{b_k} (\theta)$ are introduced
in the $S$-symmetric Fock space $\mathcal{H}$. For $\varphi \in \H_1$, their actions on vectors $\Psi = (\Psi_n) \in \D$ are given by
\begin{align*}
(z(\varphi)\Psi)^{\pmb{b_k}}_n (\pmb{\theta}) &= \sqrt{n+1}\sum_{l=1,2} \int d\theta' \overline{\varphi^{b_l}(\theta')}\Psi^{b_l \pmb{b_k}}_{n+1}(\theta',\pmb{\theta}),\\
z^\dagger(\varphi) &= (z(\varphi))^*
\end{align*}
(see \cite[Proposition 2.4]{LS14}) and they formally fulfill the following Zamolodchikov-Faddeev algebra:
\begin{align*}
z^\dagger_{b_k}(\theta)z^\dagger_{b_l}(\theta') &= S^{b_l b_k}_{b_k b_l}(\theta -\theta')z^\dagger_{b_l} (\theta')z^\dagger_{b_k}(\theta),\\
z_{b_k}(\theta)z_{b_l}(\theta') &= S^{b_l b_k}_{b_k b_l}(\theta -\theta')z_{b_l}(\theta')z_{b_k}(\theta),\\
z_{b_k}(\theta)z^\dagger_{b_l}(\theta') &= S^{b_k b_l}_{b_l b_k}(\theta' -\theta)z^\dagger_{b_l} (\theta')z_{b_k} (\theta)+\delta^{b_k b_l}\delta(\theta -\theta')\1_{\mathcal{H}}.
\end{align*}
They are opereator-valued distributions defined on $\D$ and bounded on each $n$-particle space $\H_n$
when smeared by a test function. 

Let $f \in \bigoplus _{k=1,2}\mathscr{S}(\mathbb{R}^2) $, we define
\begin{align*}
\phi(f)&:= z^\dagger(f^+)+ z(J_1f^-) \\
       &\left(= \sum_{k=1,2} \int d\theta\, \left( f^+_{b_k}(\theta) z^\dagger_{b_k}(\theta)+ (J_1f^-)_{b_k} (\theta)z_{b_k}(\theta) \right)\right).
\end{align*}
This multi-component quantum field\footnote{If the S-matrix $S(\zeta)$ were \emph{analytic in the physical strip},
$\phi(f)$ could be considered as an observable localized in the standard left wedge $W_{\mathrm L}$
and if furthermore $S$ is diagonal with additional regularity conditions,
one would be able to obtain a Haag-Kastler net with minimal length \cite{LS14, AL16}.
In contrast, our S-matrix has poles in the physical strip.}
is defined on the subspace $\D$ of $\mathcal{H}$ of vectors with finite particle number and
the properties listed in \cite[Proposition 3.1]{LS14} are fulfilled, as long as the analyticity
in the physical strip is not used.
We also introduce $\phi'$, the reflected field defined for $g \in \mathscr{S}(\mathbb{R}^2)$,
\begin{equation*}
\phi'(g):= J \phi(g_j)J = z'^\dagger (g^+) + z'(J_1 g^-),
\end{equation*}
where $z', z'^{\dagger}$ are the reflected creators and annihilators $z'_{b_k} (\theta):= J z_{b_k}(\theta)J$
and $z'^\dagger_{b_k} (\theta):= J z_{b_k}^\dagger(\theta)J$.

For the class of two-particle S-matrices $S(\theta)$ with components which are
\emph{not analytic in the physical strip} $\theta \in \mathbb{R} +i(0,\pi)$,
we have $[\phi(f), \phi'(g)] \neq 0$, namely, even the weak commutativity fails for $\phi,\phi'$.
The goal of the present paper is to find alternative wedge-observables for
the S-matrix of the sine-Gordon model.

\section{The bound state operator}\label{chi}
We introduce an operator $\chi(f)$ similarly to \cite{CT16-diag}, which we again call the ``bound state operator'',
whose mathematical structure corresponds to our fusion table, which is same as
the breather-breather fusion processes in the sine-Gordon model with two breathers.
In this model, the ``elementary particle'' is $b_1$, and we restrict ourselves
to the case where $f_{b_1}$ is the only non-zero component of a test function $f$.

\subsection{Definitions and domains}
We define it as an unbounded operator on the $S$-symmetric Fock space $\H$.
Recall that for $s < t$, $H^2(\SS_{s,t})$ is the Hardy space of analytic functions $\Psi$ in
$\SS_{s,t} := \RR + i(s,t)$ such that $\Psi(\theta + i\l)$ is $L^2(\RR)$ as a function of $\theta$
for each $\l \in (s,t)$ and their $L^2$-norm is uniformly bounded for $\l$.
For a multi-component test function $f$ whose only non-zero component is $f_{b_1}$
and is supported in $W_\L$,
its action on $\H_1$ is given as follows:
\begin{align}
 \dom(\chi_{1}(f)) &:=  H^2\left(\SS_{-\theta_{(b_1 b_1)}^{b_2},0} \right) \oplus H^2\left(\SS_{-\theta_{(b_2 b_1)}^{b_1},0} \right) \nonumber \\
 (\chi_{1}(f)\xi)_{b_k} (\theta) &:= \left\{\begin{array}{ll}
                                          -i \eta^{b_1}_{b_1 b_2} f^+_{b_1} (\theta + i\theta_{(b_1 b_2)}^{b_1} ) \xi_{b_2} (\theta - i\theta_{(b_2 b_1)}^{b_1}) & \text{ if } k = 1,\\ \\
-i \eta^{b_2}_{b_1 b_1} f^+_{b_1} (\theta + i\theta_{(b_1 b_1)}^{b_2} ) \xi_{b_1} (\theta - i\theta_{(b_1b_1)}^{b_2}) & \text{ if } k = 2.
                                         \end{array}\right. \label{eq:chi1}
\end{align}
where $\eta^{b_2}_{b_1 b_1}$, $\eta^{b_1}_{b_1 b_2}$ are the matrix elements introduced in Sec.~\ref{scattering}, see  Eq.~\eqref{Reseta}.
Actually, $\theta_{(b_2 b_1)}^{b_1} = \theta_{(b_2 b_1)}^{b_2} = \frac{\pi\nu}2$,
hence $\dom(\chi_{1}(f)) =  H^2\left(\SS_{-\frac{\pi\nu}2,0} \right)^{\oplus 2}$,
but we often keep the notation above for homogeneity.

%
The full operator $\chi(f)$ is the direct sum of its components $\chi_n(f)$ on $\H_n$:
\begin{align} \label{chin}
 \chi_n(f) := nP_n(\chi_1(f)\otimes\1\otimes\cdots\otimes\1)P_n, \quad
 \chi(f) = \bigoplus_{n=0}^\infty \chi_n(f).
\end{align}
Similarly, and as in \cite{CT16-diag}, we introduce the reflected bound state operator $\chi'(g)$
for a test function $g$ supported in the right wedge $W_\R$.
Again, its one particle projection for $g$ having only one non-zero component $g_{b_1}$ is given by
\begin{align*}
 \dom(\chi'_{1}(g)) &:= H^2\left(\SS_{0,\theta_{(b_1 b_1)}^{b_2}} \right) \oplus H^2\left(\SS_{0,\theta_{(b_2 b_1)}^{b_1}} \right) \nonumber \\
 (\chi'_{1}(g)\xi)_{b_k} (\theta) &:= \left\{\begin{array}{ll}
                                          -i \eta^{b_1}_{b_1 b_2} g^+_{b_1} (\theta - i\theta_{(b_1 b_2)}^{b_1} ) \xi_{b_2} (\theta + i\theta_{(b_2 b_1)}^{b_1}) & \text{ if } k = 1, \\ \\
 -i \eta^{b_2}_{b_1 b_1} g^+_{b_1} (\theta - i\theta_{(b_1 b_1)}^{b_2} ) \xi_{b_1} (\theta + i\theta_{(b_1 b_1)}^{b_2}) & \text{ if } k = 2.
                                         \end{array}\right.
\end{align*}
The full operator on $\H$ is given by
\begin{align}\label{chipn}
\chi'_n(g) := nP_n(\1\otimes\cdots\otimes\1\otimes\chi'_1(g))P_n, \quad
\chi'(g) = \bigoplus_n \chi'_n(g).
\end{align}

This operator is related to $\chi$ by the CPT operator $J$:
\begin{equation*}
 \chi'(g) = J\chi(g_j)J.
\end{equation*}
To see this, let us consider the one-particle components. By recalling
the expression \eqref{eq:chi1},
\begin{align*}
(J \chi_1(g_j)J \xi)_{b_\ell} (\theta)
 &= \overline{(\chi_1(g_j)J \xi)_{ b_\ell }(\theta)} \nonumber\\
 &=  \overline{-i\eta_{b_1 b_k}^{b_\ell } (g_j)^+_{b_1} (\theta + i\theta^{b_\ell}_{(b_1 b_k)}) (J\xi)_{b_k} (\theta - i\theta^{b_\ell}_{(b_k b_1)})}\nonumber\\
 &=  -i\eta^{b_\ell}_{b_1 b_k} g^+_{b_1}(\theta - i\theta^{b_\ell}_{(b_1 b_k)}) \xi_{b_k}(\theta +i\theta^{b_\ell}_{(b_k b_1)}) \nonumber\\
&= (\chi'(g)\xi )_{b_\ell} (\theta),
\end{align*}
where $l = 1$ or  $2$ and $k = 2$ or $1$, respectively, and we used that $-i\eta_{\a\b}^\g \in \RR$.
As $J_ n$ commutes with $P_n$, we have $\chi'_n(g) = J_n \chi_n(g_j) J_n$. Since the whole operators $\chi(g)$ and $\chi'(g)$ are defined as the direct sum, the desired equality follows.

We give some more explicit expressions of Eq.ns~\eqref{chin} and \eqref{chipn} by applying them to a $n$-particle vector which we assume to be $S$-symmetric and in the domain of $\chi_1(f)\otimes\1\otimes\cdots\otimes\1$ and of  $\1\otimes\cdots\otimes\1\otimes\chi'_1(g)$, respectively.
We have, from \ref{bootstrap}, \ref{parity} and \ref{hermitian} exactly as in \cite[Section 3.2]{CT16-diag},
\begin{align}\label{chiextended}
& ( \chi(f)\Psi_n )^{b_{k_1} \cdots b_{k_n}}(\theta_1, \ldots, \theta_n) \nonumber\\
&\quad =-i\sum_{1\leq \ell \leq n, \;\a_\ell =b_1,b_2}  \eta^{b_{k_\ell}}_{b_1 \a_\ell}\left( \prod_{1\leq j\leq \ell -1 } S^{b_{k_j}  b_1}_{b_1 b_{k_j}}(\theta_\ell - \theta_j +i\theta^{b_{k_\ell}}_{(b_1 \a_\ell)})\right) \nonumber\\
&\qquad\qquad \times f^+_{b_1} (\theta_\ell + i\theta^{b_{k_\ell}}_{(b_1 \a_\ell)})\Psi_n^{b_{k_1} \ldots b_{k_{\ell -1}} \a_\ell  b_{k_{\ell +1}} \ldots b_{k_n}}\left(\theta_1,\cdots,\theta_{\ell -1},\theta_\ell -i\theta_{(\a_\ell b_1)}^{b_{k_\ell}},\theta_{\ell +1},\cdots\theta_n\right),
\end{align}
where $k_1, \ldots, k_n = 1,2$ and we applied our convention that $\eta_{\a\b}^\g = 0$
if $(\a\b) \to \g$ is not a fusion, and terms containing such $\eta_{\a\b}^\g$
should be ignored (even if it contains expressions such as $\Psi(\cdots,\theta - i\theta_{(\b\a)}^\g,\cdots)$
which can be meaningless, as it might be outside the domain of analyticity).

We have a similar expression for $\chi'(g)$:
\begin{align}\label{chipextended}
&(\chi'(g)\Psi_n)^{b_{k_1} \ldots b_{k_n}}(\theta_1, \ldots, \theta_n)=\nonumber\\
&\quad = -i\sum_{1\leq \ell \leq n, \;\a_\ell =b_1,b_2}  \eta_{b_1 \a_\ell}^{b_{k_\ell}}\left( \prod_{\ell +1 \leq j\leq n } S^{ b_1 b_{k_j}}_{b_{k_j} b_1}(\theta_j - \theta_\ell +i\theta^{b_{k_\ell}}_{(b_1 \a_\ell)})\right)  \nonumber\\
&\qquad \qquad \times g^+_{b_1} (\theta_\ell - i\theta^{b_{k_\ell}}_{(b_1 \a_\ell)}) \Psi_n^{b_{k_1} \ldots b_{k_{\ell -1}} \alpha_{\ell} b_{k_{\ell +1}} \ldots b_{k_n}}\left(\theta_1,\cdots,\theta_{\ell -1},\theta_{\ell} +i\theta_{(\a_\ell b_1)}^{b_{k_\ell}},\theta_{\ell +1} \cdots\theta_n\right).
\end{align}

\subsection{Some properties}\label{sec:prop}

We remark here on some of the properties of $\chi(f)$, noting that
analogous properties hold by construction for $\chi'(g)$.
For a multi-component real test function $f$ whose only non-zero component is $f_{b_1}$ which is real,
we can prove that $\chi(f)$ is densely defined and symmetric as follows.

By construction, $\chi_1(f)$ is clearly densely defined.
To show that $\chi_1(f)$ is symmetric, we take two vectors $\xi, \psi \in \operatorname{Dom}(\chi_1 (f))$
whose components have compact inverse Fourier transform. One can show that these vectors form a core for $\chi_1(f)$.
By recalling that $\eta^{b_\ell}_{b_1 b_k} = 0$ unless $k=1, \ell=2$ or $k=2, \ell=1$,
we compute on vectors $\xi, \psi$ from the core:
\begin{align*}
\langle \psi, \chi_{1}(f) \xi \rangle
&= -\sum_{k,\ell} i\eta^{b_\ell}_{b_1 b_k} \int d\theta\, \overline{\psi^{b_\ell} (\theta)} f^+_{b_1} (\theta +i\theta^{b_\ell}_{(b_1 b_k)}) \xi^{b_k}(\theta -i\theta^{b_\ell}_{(b_k b_1)}) \nonumber \\
&= -\sum_{k,\ell} i\eta^{b_\ell}_{b_1 b_k} \int d\theta\, \overline{f^+_{ b_1} (\theta +i\pi -i\theta^{b_\ell}_{(b_1 b_k)})} \overline{\psi^{b_\ell} (\theta)}  \xi^{b_k}(\theta -i\theta^{b_\ell}_{(b_k b_1)}) \nonumber \\
&= -\sum_{k \ell} i\eta^{b_\ell}_{b_1 b_k} \int d\theta\, \overline{f^+_{ b_1} (\theta +i\pi -i\theta^{b_\ell}_{b_1 b_k})} \overline{\psi^{b_\ell} (\theta -i\theta^{b_\ell}_{(b_k b_1)})}  \xi^{b_k}(\theta) \nonumber\\
&= -\sum_{k \ell} i\eta^{b_k}_{b_1 b_\ell} \int d\theta\, \overline{f^+_{b_1} (\theta +i\theta^{b_k}_{(b_1 b_\ell)})} \overline{\psi^{b_\ell} (\theta -i\theta^{b_k}_{(b_\ell b_1)})}  \xi^{b_k}(\theta) = \langle \chi_{1}(f)\psi, \xi \rangle,
\end{align*}
where in the second equality we used the property $f^+(\theta + i\l) = \overline{f^+(i\pi - \theta - i\l)}$
explained at the end of Sec.~\ref{Fock}. In the third equality we used Cauchy theorem and performed the shift $\theta \rightarrow \theta + i\theta^{b_{\ell}}_{(b_k b_1)}$, since the integrand is analytic, bounded and rapidly decreasing in the strip $\mathbb{R} +i(0,\pi)$ due to $\xi, \psi$ being the Fourier transforms of compactly supported functions and the properties of $f^+$.
In the fourth equality we used the properties
$\pi - \theta^{b_2}_{b_1 b_1} = \theta^{b_1}_{( b_1 b_2)}$, $\theta^{b_2}_{(b_1 b_1)} = \theta^{b_1}_{(b_2 b_1)}$ and $\eta^{b_2}_{b_1 b_1} =\eta^{ b_1}_{b_1 b_2}$ from Sec.~\ref{scattering}. 

We can show that $\chi_n(f)$ is densely defined and symmetric by arguing as in \cite[Proposition 3.1]{CT16-diag}.

Furthermore, the operator $\chi(f)$ is covariant with respect to the action $U$ of the Poincar\'e group $\poincare$ on $\H$ that we introduced in Section~\ref{Fock} in the following sense. For a test function $f$ supported in $W_\L$ and $(a,\lambda) \in \poincare$ such that $a \in W_\L$, we can show that $\operatorname{Ad}U(a, \lambda)(\chi(f)) \subset \chi(f_{(a, \lambda)})$.
The key to the proof are the relations \eqref{eq:massfusion}, see \cite[Proposition 3.2]{CT16-diag} for details.

\section{Weak commutativity}\label{sec:comm}

We introduce the field
\begin{equation*}
\tilde \phi(f) = \phi(f) + \chi(f)
\end{equation*}
and its reflected field $\tilde \phi'(g) = \phi'(g) + \chi'(g) = J \tilde \phi (g_j) J$ in a similar manner as in \cite{CT16-diag}. For $f$  with support in $W_\L$ and such that $f^*= f$, the field $\tilde \phi(f)$ fulfills the properties listed in \cite[Proposition 4.1]{CT16-diag}, and a similar result also holds for the reflected field $\tilde \phi'(g)$.  Regarding the domain of $\tilde \phi$, we note that, since the domain of $\chi(f)$ contains vectors with finite particle number and with certain analyticity and boundedness properties (see Sec.~\ref{chi}), its domain is included in the domain of $\phi(f)$, and therefore $\operatorname{Dom}(\tilde \phi(f)) = \operatorname{Dom}(\chi(f))$.

As already mentioned in \cite{CT16-diag}, the field $\tilde \phi(f)$ has very subtle domain properties. In particular, because of the poles of $S$, after applying this operator to a vector (not the vacuum) in its domain, it generates a vector which is no longer in the domain of $\tilde \phi'(g)$. For this reason, products of the form $\tilde \phi(f) \tilde \phi'(g)$ and $\tilde \phi'(g)\tilde \phi(f)$ are not well defined, and we need to compute the commutator $[\tilde \phi(f), \tilde \phi'(g)]$ between arbitrary vectors $\Phi, \Psi$ from a suitable space (see below). Moreover, the commutator is smeared with test functions $f,g$ with only nonzero components corresponding to $b_1$.

We start by considering vectors $\Psi_n^{\pmb{b_k}}$
in the domain discussed in Sec.~\ref{sec:prop}. These vectors admit analytic continuation in the first variable,
and actually a meromorphic continuation in each variable, to $\pm i\frac{\pi \nu}2$.
We also note that for certain components $\Psi_n^{b_{k_1} \cdots b_{k_n}}(\theta_1, \cdots, \theta_n)$,
specifically in the case where two of the indices are equal, $b_{k_j} = b_{k_\ell} = \a$, we can infer the existence of zeros by the following
computation:
\begin{align*}
 &\Psi_n^{b_{k_1} \cdots \a \cdots \a \cdots b_{k_n}}(\theta_1,\cdots,\theta_j,\cdots,\theta_\ell,\cdots,\theta_n)\\
 &= \left(  \prod_{p = j+1}^{\ell-1} S^{\a b_{k_p}}_{b_{k_p} \a } (\theta_p-\theta_j)S^{b_{k_p} \a}_{\a b_{k_p}}(\theta_\ell - \theta_p) \right)   S_{\a\a}^{\a\a}(\theta_\ell-\theta_j) \\
&\quad \times  \Psi_n^{b_{k_1} \cdots \a \cdots \a \cdots b_{k_n}}(\theta_1,\cdots,\theta_\ell,\cdots,\theta_j,\cdots,\theta_n).
\end{align*}
Hence, by \ref{atzero} and  \ref{hermitian}, $\Psi_n^{\pmb{b_k}}$ has a zero at $\theta_j - \theta_\ell = 0$.
However, this does not imply existence of zeros for other components.
Furthermore, in the proof of Theorem \ref{theo:commutator},
we will encounter certain poles of $S$ in the computation.
Hence, we consider vectors from the following space:
\begin{equation}\label{domain}
\D_0 := \left\{ \Psi \in \D: \begin{array}{l}
                              \Psi_n^{\pmb{b_k}} \text{ is analytic in } \mathbb{R}^n +i(-\frac{\pi \nu}2,\frac{\pi \nu}2)^n,\\
                              \Psi_n^{\pmb{b_k}}(\pmb\theta + i\pmb \l) \in L^2(\RR^n) \text{ for } \pmb \l \in (-\frac{\pi \nu}2, \frac{\pi \nu}2)^n, \text{ with a uniform bound and}\\
                              \text{has a zero at } \theta_j -\theta_\ell = 0, \pm i\pi(1-\nu), \pm i\pi(\frac{3\nu}2 - 1), \pm \frac{i\pi \nu}{2} \text{ for all } j, \ell
                             \end{array}
        \right\},
\end{equation}
where $k_j=1,2$. Note that $\D_0 \subset \dom(\fct(f)) \cap \dom(\fct'(g))$.

One can see that $\D_0$ is dense as follows:
we take 
\[
C_n(\pmb{\theta})
:= \prod_{\l \in \Lambda}\prod_{1 \leq j < k \leq n} \frac{(\theta_k -\theta_j -i\l)(\theta_j -\theta_k -i\l)}{(\theta_k -\theta_j -2\pi i)(\theta_j -\theta_k -2\pi i)},
\quad \Lambda = \left\{0, \pi(1-\nu), \pi(\textstyle{\frac{3\nu}2-1)}, \frac{\pi\nu}2 \right\},
\]
and consider the set
\[
 \left\{M_{C_n}P_n(\xi_1\otimes\cdots\otimes\xi_n), \xi_j \in \dom(\chi_1(f))\cap\dom(\chi'_1(g))\right\}. 
\]
As $C_n$ is symmetric and it has zeros at the poles of $S$, the set above is a subset of $\D_0$.
Furthermore, as $C_n$ is bounded and invertible on $\RR^n$, $M_{C_n}$ maps a dense set to a dense set.
The set $\left\{P_n(\xi_1\otimes\cdots\otimes\xi_n), \xi_j \in \dom(\chi_1(f))\cap\dom(\chi'_1(g))\right\}$ is dense,
therefore, so are its image $M_{C_n}\left(\dom(\fct(f)) \cap \dom(\fct'(g))\right)$ and $\D_0$.
Thanks to \ref{regularity}, \cite[Proposition E.7]{Tanimoto16-1} and the properties of $\D_0$,
we can safely use analytic continuations in the proof of our main theorem.

\begin{theorem}\label{theo:commutator}
Let $f$ and $g$ be test functions supported in $W_\L$ and $W_\R$, respectively, and with the property that $f=f^*$ and $g=g^*$.
Furthermore, assume that $f,g$ have components $f_{b_k} =0$ and $g_{b_k} =0$ for $k \neq 1$.
Then, for each $\Phi, \Psi$ in $\D_0$, we have
 \[
 \<\fct(f)\Phi, \fct'(g)\Psi\> = \<\fct'(g)\Phi, \fct(f)\Psi\>.
 \]
\end{theorem}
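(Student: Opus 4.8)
The plan is to expand both sides of the claimed identity into sums over the four operators $\phi(f),\chi(f),\phi'(g),\chi'(g)$ and match terms. Writing $\fct(f)=\phi(f)+\chi(f)$ and $\fct'(g)=\phi'(g)+\chi'(g)$, the difference $\<\fct(f)\Phi,\fct'(g)\Psi\>-\<\fct'(g)\Phi,\fct(f)\Psi\>$ splits into the ``free'' part involving only $\phi,\phi'$ and the ``fusion'' parts involving at least one $\chi$. The free part $\<\phi(f)\Phi,\phi'(g)\Psi\>-\<\phi'(g)\Phi,\phi(f)\Psi\>$ is exactly the weak wedge-commutator of the standard fields; since $f$ is supported in $W_\L$ and $g$ in $W_\R$, and since $\phi(f),\phi'(g)$ are the Lechner--Sch\"utzenhofer fields whose weak commutativity is governed by the analyticity strip, this vanishes by the argument in \cite{LS14} adapted as in \cite{CT16-diag}. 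The cross terms $\<\chi(f)\Phi,\phi'(g)\Psi\>$ etc.\ and the pure fusion term $\<\chi(f)\Phi,\chi'(g)\Psi\>$ must then be shown to cancel pairwise.

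First I would write out each pairing explicitly using the kernel formulas \eqref{chiextended} and \eqref{chipextended} together with the creation/annihilation expansion of $\phi,\phi'$, evaluated between the fixed $n$-particle components $\Phi_n^{\pmb{b_k}}$ and $\Psi_n^{\pmb{b_k}}$ in $\D_0$. Because $\chi(f)$ inserts a fusion on the left end of the particle string (multiplying by $\eta$ and the $S$-factors $\prod_{j<\ell}S^{b_{k_j}b_1}_{b_1 b_{k_j}}$) while $\chi'(g)$ acts on the right end, the overlaps become integrals over $\RR^n$ of products of $f^+_{b_1}$, $g^+_{b_1}$, a string of $S$-factors, and the components $\overline{\Phi_n}$, $\Psi_n$ analytically continued into the strip. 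The strategy for each matching is to shift contours: using that $\Phi_n,\Psi_n\in\D_0$ are analytic and $L^2$ in $\RR^n+i(-\tfrac{\pi\nu}2,\tfrac{\pi\nu}2)^n$ with the prescribed zeros at $\theta_j-\theta_\ell=0,\pm i\pi(1-\nu),\pm i\pi(\tfrac{3\nu}2-1),\pm\tfrac{i\pi\nu}2$, I move the integration variable by the relevant fusion angle $\theta^{b_\ell}_{(b_1 b_k)}$ or $\theta^{b_\ell}_{(b_k b_1)}$, invoking Cauchy's theorem. The zeros in $\D_0$ are placed precisely to kill the poles of $S$ that the contour crosses, so no residue contributions survive, and \cite[Proposition E.7]{Tanimoto16-1} licenses the continuation under \ref{regularity}. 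The algebraic identities \ref{bootstrap}, \ref{parity}, \ref{hermitian}, \ref{crossing}, the relation $f^+(\theta-i\pi)=g^-$--type boundary value from the end of Sec.~\ref{Fock}, and the residue/angle relations \eqref{prop:angles}, $R^{b_2}_{b_1 b_1}=R^{b_1}_{b_1 b_2}$, $\eta^{b_2}_{b_1 b_1}=\eta^{b_1}_{b_1 b_2}=\eta^{b_1}_{b_2 b_1}$ are then used to convert the shifted left-hand integrand into the corresponding right-hand one. The term $\<\chi(f)\Phi,\phi'(g)\Psi\>$ should pair with $\<\phi(f)\Phi,\chi'(g)\Psi\>$ (a $\chi$--$\phi'$ cancelling a $\phi$--$\chi'$, reflecting a single fusion process), and $\<\chi(f)\Phi,\chi'(g)\Psi\>$ with $\<\chi'(g)\Phi,\chi(f)\Psi\>$, each cancellation hinging on the positivity $-i\eta^\gamma_{\alpha\beta}\in\RR$ and the positive-residue condition \ref{positiveresidue}.

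The main obstacle I expect is the bookkeeping of the $S$-factor strings produced by fusing at opposite ends of the particle chain: when $\chi(f)$ acts on the left it drags an ordered product $\prod_{1\le j\le\ell-1}S^{b_{k_j}b_1}_{b_1 b_{k_j}}(\theta_\ell-\theta_j+i\theta^{b_{k_\ell}}_{(b_1\alpha_\ell)})$ past the summed-over fusion index, while $\chi'(g)$ produces $\prod_{\ell+1\le j\le n}S^{b_1 b_{k_j}}_{b_{k_j}b_1}(\theta_j-\theta_\ell+i\theta^{b_{k_\ell}}_{(b_1\alpha_\ell)})$ at the right end. Reconciling these so that the full chain of $S$-factors telescopes correctly after the contour shift is delicate, and it is exactly here that the bootstrap equation \ref{bootstrap} is essential, together with the $S$-symmetry of $\Phi_n,\Psi_n$ to reorder rapidities. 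A secondary subtlety is that the contour shift must avoid introducing spurious residues from the component zeros of $\Psi_n$ itself (the diagonal zeros at $\theta_j-\theta_\ell=0$ coming from \ref{atzero} and \ref{hermitian}); one must verify these work in our favour rather than obstructing the shift. Because the entire argument is structurally parallel to \cite[Theorem 4.3]{CT16-diag}, with the two-breather fusion table and the CDD-corrected residues replacing the diagonal data there, I would organize the proof by reducing each of the finitely many pairings to the corresponding computation in \cite{CT16-diag}, indicating explicitly where the present model's angles from Table~\ref{Table} and the relations \eqref{prop:angles} enter, and checking that the sign adjustment guaranteed by \ref{positiveresidue} makes the cancellations exact.
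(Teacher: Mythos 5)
There is a genuine gap, and it sits at the very heart of your cancellation scheme. You assert that the ``free'' part $\<\phi(f)\Phi,\phi'(g)\Psi\>-\<\phi'(g)\Phi,\phi(f)\Psi\>$ vanishes by the Lechner--Sch\"utzenhofer argument, and correspondingly that $\<\chi(f)\Phi,\chi'(g)\Psi\>$ cancels against $\<\chi'(g)\Phi,\chi(f)\Psi\>$. Both claims are false for the S-matrix at hand. The paper states explicitly (end of Section \ref{Fock}) that when the components of $S$ have poles in the physical strip, $[\phi(f),\phi'(g)]\neq 0$ even weakly: the contour shift by $i\pi$ that proves weak commutativity in the analytic case now crosses the $s$- and $t$-channel poles of $S^{b_k b_1}_{b_1 b_k}$, leaving the nonzero residue terms \eqref{phi1}--\eqref{phi3}. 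This is precisely why the bound state operator $\chi$ is introduced at all: the actual mechanism of the proof is that the \emph{nonvanishing} weak commutator $[\phi'(g),\phi(f)]$ is cancelled by the equally nonvanishing weak commutator $[\chi(f),\chi'(g)]$ (term by term: \eqref{phi1} against \eqref{chip1}, \eqref{phi2} against \eqref{chi3}, \eqref{phi4} against \eqref{chip4}, \eqref{phi3} against \eqref{chi1}), using $(\eta^{b_2}_{b_1b_1})^2=-2\pi i R^{b_2}_{b_1b_1}$, the residue relations $R'^{b_2}_{b_1b_1}=-R^{b_2}_{b_1b_1}$, and the angle relations \eqref{prop:angles}. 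Your scheme, in which each of these two commutators vanishes separately, would make $\chi$ entirely superfluous and cannot produce the correct identity.

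Your treatment of the cross terms is closer to the truth but still misorganized: the paper pairs $[\chi(f),z'(J_1g^-)]$ with $[z(J_1f^-),\chi'(g)]$ (cancelling via the relation $\pi-\theta^{b_1}_{b_1b_2}=\theta^{b_1}_{(b_2b_1)}$ after a contour shift by $\theta^{b_1}_{(b_2b_1)}$), and disposes of the creator cross terms $[z^\dagger(f^+),\chi'(g)]$ and $[\chi(f),z'^\dagger(g^+)]$ by adjointness, rather than pairing whole $\chi$--$\phi'$ overlaps against whole $\phi$--$\chi'$ overlaps. The technical machinery you invoke --- contour shifts licensed by the zeros built into $\D_0$, the bootstrap and crossing axioms, positivity of residues --- is indeed what the proof uses, but without the correct identification of \emph{which} nonzero contributions cancel \emph{which}, the argument does not go through.
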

\begin{proof}
As in our previous works, we may assume that the vectors $\Phi$ and $\Psi$ are already $S$-symmetric.
Furthermore, we recall that the domains of $\fct(f), \fct'(g)$ coincide with those of $\chi(f), \chi'(g)$,
respectively, hence we have the following equalities as operators:
 \begin{align*}
  \fct(f) &= \phi(f) + \chi(f) = z^\dagger(f^+) + \chi(f) + z(J_1f^-), \\
  \fct'(g) &= \phi'(g) + \chi'(g) = z'^\dagger(g^+) + \chi'(g) + z'(J_1g^-).
 \end{align*}
Therefore, the (weak) commutator $[\fct(f), \fct'(g)]$ expands into several terms that we will compute individually.

\begin{flushleft} 
{\it The commutator $[\phi(f), \phi'(g)]$}
\end{flushleft}

This commutator has been computed in \cite{LS14} and then simplified in the case where $S$ is diagonal in \cite{CT16-diag}. Here, we briefly recall its expression:
\begin{align*}
&([\phi'(g), \phi(f)]\Psi_n)^{\pmb{b_k}}(\theta_1,\cdots,\theta_n) \nonumber\\
&\quad=\;\int d\theta'\, \left(  g^-_{b_1}(\theta') \left(\prod_{p=1}^n S^{b_{k_p} b_1}_{b_1 b_{k_p}}(\theta' -\theta_p)\right) f^+_{b_1}(\theta')
- g^+_{b_1}(\theta') \left(\prod_{p=1}^n  \overline{S^{b_{k_p} b_1}_{b_1 b_{k_p}}(\theta' -\theta_p)}\right) f^-_{b_1}(\theta')\right) \nonumber\\
&\quad\quad\quad\quad\times(\Psi_n)^{\pmb{b_k}}(\theta_1, \ldots, \theta_n).
\end{align*}
By \ref{crossing} and the analytic properties of $f^\pm, g^\pm$ explained in Section \ref{Fock},
the first term in the integrand is equal to the second term up to a shift of $+i\pi$ in $\theta'$.
Since $S$ has some poles in the physical strip, we obtain residues from this difference.

We are considering test functions $f,g$ whose only non-zero components correspond
to $b_1$. In this case, the factor $S_{b_1 b_k}^{b_k b_1}$ appearing in the expression of
the commutator have exactly two simple poles at $\zeta = i\theta_{b_1 b_k}^{b_{k'}}, i\theta_{b_1 b_k}^{\prime b_{k'}}$ with $k=1, k'=2$
and $k = 2, k' = 1$,
as seen in the fusion table in Sec.~\ref{scattering}.

With the notation $R_{b_1 b_k}^{b_{k'}}, R_{b_1 b_k}^{\prime b_{k'}}$ which are nonzero only for $k=1, k' =2$ and $k=2, k' = 1$, by applying the Cauchy theorem, we get the contributions from the above-mentioned poles:
\begin{align*}
&\frac{1}{2\pi i} ([\phi'(g),\phi(f)]\Psi_n)^{\pmb{b_k}}(\theta_1, \ldots, \theta_n)\nonumber\\
&\; =\; \sum_{k =1,2} \left( \sum_{j=1}^n R_{b_1 b_{k_j}}^{b_k} g^-_{ b_1}(\theta_j +i\theta_{b_1 b_{k_j}}^{b_k}) f^+_{b_1} (\theta_j +i\theta_{ b_1 b_{k_j}}^{b_k}) \left( \prod_{\substack{p=1\\ p \neq j}}^n S^{b_{k_p} b_1}_{b_1 b_{k_p}}(\theta_j +i\theta_{b_1 b_{k_j}}^{b_k} -\theta_p) \right)\right.\nonumber\\
&\; \quad\quad\quad\quad+\; \left.\sum_{j=1}^n R_{ b_1 b_{k_j}}^{\prime b_k} g^-_{ b_1}(\theta_j +i\theta'\phantom{}^{b_k}_{b_1 b_{k_j}}) f^+_{b_1} (\theta_j +i\theta'\phantom{}^{b_k}_{ b_1 b_{k_j}}) \left( \prod_{\substack{p=1\\ p \neq j}}^n S^{b_{k_p} b_1}_{b_1 b_{k_p} }(\theta_j +i\theta'\phantom{}^{b_k}_{b_1 b_{k_j}}-\theta_p) \right)\right) \nonumber \\
&\quad\quad\quad\times (\Psi_n)^{b_{k_1} \ldots b_{k_n}}(\theta_1, \ldots,\theta_n).
\end{align*}
More explicitly, the possible terms from the above expression are given by the following.

\allowdisplaybreaks
\begin{subequations}
\begin{align}
&\frac{1}{2\pi i} ([\phi'(g),\phi(f)]\Psi_n)^{\pmb{b_k}}(\theta_1, \ldots, \theta_n)\nonumber\\
&\;=\; \sum_{j=1}^n R^{b_2}_{b_1 b_1} g^-_{b_1} (\theta_j +i\theta^{b_2}_{b_1 b_1})f^+_{b_1}(\theta_j +i\theta^{b_2}_{b_1 b_1}) \left( \prod_{\substack{p=1\\ p \neq j}}^n S^{b_{k_p} b_1}_{b_1 b_{k_p}}(\theta_j +i\theta_{b_1 b_1}^{b_2}-\theta_p)  \right)\nonumber\\
&\quad \quad \times (\Psi_n)^{b_{k_1} \ldots b_1 \ldots b_{k_n}}(\theta_1, \ldots,\theta_j, \ldots, \theta_n) \label{phi1}\\
&\; +\; \sum_{j=1}^n R'\phantom{}^{b_1}_{b_1 b_2} g^-_{b_1} (\theta_j +i\theta'\phantom{}^{b_1}_{b_1 b_2})
f^+_{b_1}(\theta_j +i\theta'\phantom{}^{b_1}_{b_1 b_2}) 
\left( \prod_{\substack{p=1\\ p \neq j}}^n S^{b_{k_p} b_1}_{b_1 b_{k_p}}(\theta_j +i\theta'\phantom{}^{b_1}_{b_1 b_2}-\theta_p)  \right)\nonumber\\
&\quad \quad \times (\Psi_n)^{b_{k_1} \ldots  b_2 \ldots b_{k_n}}(\theta_1, \ldots, \theta_j, \ldots, \theta_n) \label{phi2} \\
&\;+\; \sum_{j=1}^n R^{b_1}_{b_1 b_2} g^-_{b_1} (\theta_j +i\theta^{b_1}_{b_1 b_2})f^+_{b_1}(\theta_j +i\theta^{b_1}_{b_1 b_2}) \left( \prod_{\substack{p=1\\ p \neq j}}^n S^{b_{k_p} b_1}_{b_1 b_{k_p}}(\theta_j +i\theta_{b_1 b_2}^{b_1}-\theta_p)  \right)\nonumber\\
&\quad \quad \times (\Psi_n)^{b_{k_1} \ldots b_2 \ldots b_{k_n}}(\theta_1, \ldots, \theta_j, \ldots, \theta_n)\label{phi4} \\
&\; +\; \sum_{j=1}^n R'\phantom{}^{b_2}_{b_1 b_1} g^-_{b_1} (\theta_j +i\theta'\phantom{}^{b_2}_{b_1 b_1})f^+_{b_1}(\theta_j +i\theta'\phantom{}^{b_2}_{b_1 b_1}) \left( \prod_{\substack{p=1\\ p \neq j}}^n S^{b_{k_p} b_1}_{b_1 b_{k_p}}(\theta_j +i\theta'\phantom{}^{b_2}_{b_1 b_1}-\theta_p)  \right)\nonumber\\
&\quad \quad \times (\Psi_n)^{b_{k_1} \ldots b_1 \ldots b_{k_n}}(\theta_1, \ldots, \theta_j, \ldots, \theta_n)\label{phi3}.
\end{align}
\end{subequations}
%

\begin{flushleft} 
{\it The commutator $[\chi(f), \chi'(g)]$}
\end{flushleft}

We compute this commutator between vectors $\Psi, \Phi$ with only $n$-particle components and with $f,g$ having only non-zero components of type $b_1$.
Recall the expressions of $\chi(f)$ and $\chi'(g)$ in Sec.~\ref{chi},
where they are written as the sum of $n$ operators acting on different variables,
therefore, there are $n^2$ terms in each of
the scalar products $\<\chi'(g)\Phi, \chi(f)\Psi\>$ and $\<\chi(f)\Phi, \chi'(g)\Psi\>$.
Of these, one can show that the $n(n-1)$ terms in which the above-mentioned operators act on different variables
give exactly the same contribution, exactly as in \cite{CT16-diag} (this time the operators $\chi_1(f)$ and $\chi'_1(g)$
are not positive, but $\chi(f)\otimes\1\otimes\cdots \otimes\1$ and $\1\otimes\cdots\otimes\1\otimes\chi'_1(g)$ are strongly commuting,
hence we may consider their polar decomposition), which we denote by $C$, therefore, they cancel in the commutator and hence are irrelevant.

Following \cite[P.\! 35]{CT16-diag}, we exhibit the relevant parts ($\pmb{k} := k_1, \ldots, k_n$ where each $k_j$ can take $1,2$.
Furthermore, if $k_j = 1$, then we put $k'_j = 2$ and if $k_j = 2$, then $k'_j = 1$):
\begin{align*}
&\<\chi'(g)\Phi, \chi(f)\Psi\> - C \nonumber\\
& \;= \;  \sum_{j=1}^n \underset{\a_j, \b_j =1,2}{\sum_{\pmb{k}}}  \int d\theta_1 \ldots d\theta_n\; \eta^{b_{k_j}}_{b_1 b_{\a_j}} \left(  \prod_{p=1}^{j-1}S^{b_{k_p} b_1}_{b_1 b_{k_p}}  \left(\theta_j -\theta_p +i\theta_{(b_1 b_{\a_j})}^{b_{k_j}}\right)\right)f^+_{b_1}\left(\theta_j +i\theta_{(b_1 b_{\a_j})}^{b_{k_j}}\right) \nonumber\\
& \;\quad \times  (\Psi_n)^{b_{k_1} \ldots b_{\a_j} \ldots b_{k_n}}\left(\theta_1, \ldots, \theta_j - i\theta^{b_{k_j}}_{(b_{\a_j} b_1)}, \ldots, \theta_n\right) \eta_{b_1 b_{\b_j}}^{b_{k_j}} \left( \prod_{q= j+1}^n S^{b_{k_q} b_1}_{b_1 b_{k_q}}\left(\theta_j -\theta_q +i\theta_{(b_1 b_{\b_j})}^{b_{k_j}}   \right)  \right) \nonumber\\
& \;\quad \times g^+_{b_1}\left(\theta_j +i\theta_{(b_1 b_{\b_j})}^{b_{k_j}} -i\pi \right) \overline{(\Phi_n)^{b_{k_1} \ldots b_{\b_j} \ldots b_{k_n}}\left(\theta_1, \ldots, \theta_j +i\theta_{(b_{\b_j} b_1)}^{b_{k_j}}, \ldots, \theta_n \right)} \nonumber\\
& \;= \;  \sum_{j=1}^n \sum_{\pmb{k}}  \int d\theta_1 \ldots d\theta_n\; \eta^{b_{k_j}}_{b_1 b_{k'_j}} \left(  \prod_{p=1}^{j-1}S^{b_{k_p} b_1}_{b_1 b_{k_p}}  \left(\theta_j -\theta_p +i\theta_{(b_1 b_{k'_j})}^{b_{k_j}}\right)\right)f^+_{b_1}\left(\theta_j +i\theta_{(b_1 b_{k'_j})}^{b_{k_j}}\right) \nonumber\\
& \;\quad \times  (\Psi_n)^{b_{k_1} \ldots b_{k'_j} \ldots b_{k_n}}\left(\theta_1, \ldots, \theta_j - i\theta^{b_{k_j}}_{(b_{k'_j} b_1)}, \ldots, \theta_n\right) \eta_{b_1 b_{k'_j}}^{b_{k_j}} \left( \prod_{q= j+1}^n S^{b_{k_q} b_1}_{b_1 b_{k_q}}\left(\theta_j -\theta_q +i\theta_{(b_1 b_{k'_j})}^{b_{k_j}}   \right)  \right) \nonumber\\
& \;\quad \times g^+_{b_1}\left(\theta_j +i\theta_{(b_1 b_{k'_j})}^{b_{k_j}} -i\pi \right) \overline{(\Phi_n)^{b_{k_1} \ldots b_{k'_j} \ldots b_{k_n}}\left(\theta_1, \ldots, \theta_j +i\theta_{(b_{k'_j} b_1)}^{b_{k_j}}, \ldots, \theta_n \right)} \nonumber\\
& \; = \; \sum_{j=1}^n \sum_{\pmb{k}}  \int d\theta_1 \ldots d\theta_n\; \eta^{b_{k_j}}_{b_1 b_{k'_j}}\left(  \prod_{p=1}^{j-1}S^{b_{k_p} b_1}_{b_1 b_{k_p}}  \left(\theta_j -\theta_p +i\theta_{b_1 b_{k'_j}}^{b_{k_j}} \right)\right) f^+_{b_1}\left(\theta_j +i\theta_{b_1 b_{k'_j}}^{b_{k_j}}\right) \nonumber\\
& \;\quad \times  (\Psi_n)^{b_{k_1} \ldots b_{k'_j} \ldots b_{k_n}}(\theta_1, \ldots, \theta_j, \ldots, \theta_n) \eta_{b_1 b_{k'_j}}^{b_{k_j}} \left( \prod_{q= j+1}^n S^{b_{k_q} b_1}_{b_1 b_{k_q}}\left(\theta_j -\theta_q +i\theta_{b_{k'_j} b_1}^{b_{k_j}}   \right)  \right) \nonumber\\
& \;\quad \times g^+_{b_1}\left(\theta_j +i\theta_{b_1 b_{k'_j}}^{b_{k_j}} -i\pi \right) \overline{(\Phi_n)^{b_{k_1} \ldots b_{k'_j} \ldots b_{k_n}}\left(\theta_1, \ldots, \theta_j, \ldots, \theta_n \right)},
\end{align*}
where we used \eqref{chiextended} and \eqref{chipextended},
exploited that $\eta_{b_1b_1}^{b_2}, \eta_{b_1b_2}^{b_1}$ are the only nonzero combinations,
then performed the shift $\theta_j \rightarrow \theta_j +i\theta^{b_{k_j}}_{(b_{k'_j} b_1)}$ in the third equality
and used $\theta_{\a\b}^\g = \theta_{(\a\b)}^\g + \theta_{(\b\a)}^\g$.
This shift in $\theta_j$ is allowed by the analyticity and decay properties
of $f^+$, $g^+$ at infinity in the strip, \cite[Lemma B.2]{CT15-1} and
the property of $\Psi, \Phi \in \D_0$ explained before Theorem~\ref{theo:commutator}:
more precisely, depending on whether $b_{k_p} = b_1$ or $b_2$ (respectively for $b_{k_q}$),
$S_{b_1 b_{k_p}}^{b_{k_p} b_1}(\zeta)$ has a pole at $i\pi\nu$ and $i\pi(1-\nu)$, or at $i\frac{\pi\nu}2$ and $i(1-\frac{\pi\nu}2)$.
As $\theta_j \to \theta_j +i\theta^{b_{k_j}}_{(b_{k'_j} b_1)} = \theta_j + i\frac{\pi\nu}2$
(this does not depend on $b_{k_j}$: see Table \ref{Table}),
the integral contour might move across the pole when $\theta_j \to \theta_j + i\pi(1-\nu), \theta_j \to \theta_j + i\pi(\frac{3\nu}2 - 1)$
or $\theta_j \to \theta_j + i\frac{\pi\nu}2$,
depending on the combination of $b_{k_p}$ and $b_{k_j}$.
But these poles are cancelled by the zeros of $\Psi_n, \Phi_n \in \D_0$, hence the shift is legitimate
and the result is $L^1$ (the integral is the inner product of two $L^2$-functions).

Similarly, we can compute the other term $\<\chi(f) \Phi, \chi'(g)\Psi\>$ in the commutator $[\chi(f),\chi'(g)]$ and obtain:
\begin{align*}
&\<\chi(f)\Phi, \chi'(g)\Psi\> - C\nonumber\\
=& \sum_{j=1}^n  \underset{\a_j, \b_j = 1,2}{\sum_{\pmb{k}}}   \eta_{b_1 b_{\a_j}}^{b_{k_j}} \int d\theta_1 \ldots d\theta_n \overline{\left( \prod_{p=1}^{j-1}S^{b_{k_p} b_1}_{b_1 b_{k_p}}\left(\theta_j -\theta_p +i\theta_{(b_1 b_{\a_j})}^{b_{k_j}}\right) \right)}\nonumber\\
&\; \times \; \overline{f^+_{b_1}\left(\theta_j +i\theta_{(b_1 b_{\a_j})}^{b_{k_j}}\right)} \overline{ (\Phi_n)^{b_{k_1} \ldots b_{\a_j} \ldots b_{k_n}} \left(\theta_1, \ldots, \theta_k -i\theta_{(b_{\a_j} b_1)}^{b_{k_j}}, \ldots, \theta_n\right)} \nonumber\\
&\; \times\; \eta_{b_1 b_{\b_j}}^{b_{k_j}} \left( \prod_{q= j+1}^n S^{b_1 b_{k_q}}_{b_{k_q} b_1}\left(\theta_q -\theta_j +i\theta_{(b_1 b_{\b_j})}^{b_{k_j}}\right) \right)\nonumber \\
&\; \times \; g^+_{b_1}\left(\theta_j -i\theta_{(b_1 b_{\b_j})}^{b_{k_j}}\right)(\Psi_n)^{b_{k_1} \ldots b_{\b_j} \ldots b_{k_n}}\left(\theta_1, \ldots, \theta_j +i\theta_{(b_{\b_j} b_1)}^{b_{k_j}}, \ldots, \theta_n\right)\nonumber\\
=& \sum_{j=1}^n \sum_{\pmb{k}} \eta^{b_{k_j}}_{b_1 b_{k'_j}}\int d\theta_1 \ldots d\theta_n \left( \prod_{p=1}^{j-1}S^{b_{k_p} b_1}_{b_1 b_{k_p}}\left(\theta_j -\theta_p -i\theta_{b_{k'_j} b_1}^{b_{k_j}} +i\pi\right) \right)\nonumber\\
&\; \times \; f^+_{b_1}\left(\theta_j -i\theta_{b_{k'_j} b_1}^{b_{k_j}}+i\pi\right) \overline{ (\Phi_n)^{b_{k_1} \ldots b_{k'_j} \ldots b_{k_n}} \left(\theta_1, \ldots, \theta_j, \ldots, \theta_n\right)}\nonumber\\
&\; \times\; \eta_{b_1 b_{k'_j}}^{b_{k_j}} \left( \prod_{q= j+1}^n S^{b_{k_q} b_1}_{b_1 b_{k_q}}\left(\theta_j -\theta_q -i\theta_{b_1 b_{k'_j}}^{b_{k_j}}+i\pi\right) \right)g^+_{b_1}\left(\theta_j -i\theta_{b_1 b_{k'_j}}^{b_{k_j}}\right) \nonumber\\
&\; \times \; (\Psi_n)^{b_{k_1} \ldots b_{k'_j} \ldots b_{k_n}}(\theta_1, \ldots, \theta_j, \ldots, \theta_n),
\end{align*}
where we used \eqref{chiextended}, \eqref{chipextended} and $\theta_{\a\b}^\g = \theta_{(\a\b)}^\g + \theta_{(\b\a)}^\g$,
we performed the shift $\theta_j \rightarrow \theta_j -i\theta^{b_{k_j}}_{(b_{k'_j} b_1)}$
and we used properties \ref{unitarity}--\ref{crossing}.
As before, we can perform the shift in $\theta_j$ using the analyticity and decay properties of $f^+, g^-$ at infinity in the strip,
\cite[Lemma B.2]{CT15-1} and the zeros of the vectors $\Psi, \Phi \in \D_0$. This also guarantees the fact that the result is still $L^1$.

Since there are only two types of fusion processes $(b_1 b_1) \rightarrow b_{2}$ and $(b_1 b_2) \rightarrow b_1$ in the model,
the possible contributions to the expectation values above are
\allowdisplaybreaks
\begin{subequations}
\begin{align}
&\<\chi'(g)\Phi, \chi(f)\Psi\> - C \nonumber\\
& \; = \;   \sum_{j=1}^n\sum_{\pmb{k}} \eta^{b_2}_{b_1 b_1}  \eta_{b_1 b_1}^{b_2} \int d\theta_1 \ldots d\theta_n\; \prod_{p=1}^{j-1}S^{b_{k_p} b_1}_{b_1 b_{k_p}}  \left(\theta_j -\theta_p +i\theta_{b_1 b_1}^{b_2}\right) \nonumber\\
& \; \; \times f^+_{b_1}\left(\theta_j +i\theta_{b_1 b_1}^{b_2}\right) (\Psi_n)^{b_{k_1} \ldots b_1 \ldots b_{k_n}}\left(\theta_1, \ldots, \theta_j , \ldots, \theta_n\right) 
\prod_{q= j+1}^n S^{b_{k_q} b_1}_{b_1 b_{k_q}}\left(\theta_j -\theta_q +i\theta_{b_1 b_1}^{b_2}\right)  \nonumber\\
 & \; \; \times g^+_{b_1}\left(\theta_j +i\theta_{b_1 b_1}^{b_2} -i\pi \right) \overline{(\Phi_n)^{b_{k_1} \ldots b_1 \ldots b_{k_n}}\left(\theta_1, \ldots, \theta_j, \ldots, \theta_n \right)} \label{chip1}\\
& \; + \;   \sum_{j=1}^n \sum_{\pmb{k}} \eta^{b_1}_{b_1 b_2}  \eta_{b_1 b_2}^{b_1} \int d\theta_1 \ldots d\theta_n\; \prod_{p=1}^{j-1}S^{b_{k_p} b_1}_{b_1 b_{k_p}}  \left(\theta_j -\theta_p +i\theta_{b_1 b_2}^{b_1}\right) \nonumber\\
& \; \; \times f^+_{b_1}\left(\theta_j +i\theta_{b_1 b_2}^{b_1}\right) (\Psi_n)^{b_{k_1} \ldots b_2 \ldots b_{k_n}}(\theta_1, \ldots, \theta_j, \ldots, \theta_n) 
 \prod_{q= j+1}^n S^{b_{k_q} b_1}_{b_1 b_{k_q}}\left(\theta_j -\theta_q +i\theta_{b_2 b_1}^{b_1}  \right) \nonumber\\
 & \; \; \times g^+_{b_1}\left(\theta_j +i\theta_{b_1 b_2}^{b_1} -i\pi \right) \overline{(\Phi_n)^{b_{k_1} \ldots b_2 \ldots b_{k_n}}\left(\theta_1, \ldots, \theta_j, \ldots, \theta_n \right)},\label{chip4}
\end{align}
\end{subequations}
and similarly,
\allowdisplaybreaks
\begin{subequations}
\begin{align}
&\<\chi(f)\Phi, \chi'(g)\Psi\> - C\nonumber\\
& \; = \; \sum_{j=1}^n \sum_{\pmb{k}}  \eta^{b_2}_{b_1 b_1}  \eta_{b_1 b_1}^{b_2} \int d\theta_1 \ldots d\theta_n\; \prod_{p=1}^{j-1}S^{b_{k_p} b_1}_{b_1 b_{k_p}}  \left(\theta_j -\theta_p -i\theta_{b_1 b_1}^{b_2} + i\pi\right) \nonumber\\
& \, \; \times f^+_{b_1}\left(\theta_j -i\theta_{b_1 b_1}^{b_2} + i\pi\right) \overline{(\Phi_n)^{b_{k_1} \ldots b_1 \ldots b_{k_n}}\left(\theta_1, \ldots, \theta_j,\ldots, \theta_n\right) }\nonumber\\
 & \, \; \times
\prod_{q= j+1}^n S^{b_{k_q} b_1}_{b_1 b_{k_q}}\left(\theta_j -\theta_q -i\theta_{b_1  b_1}^{b_2} +i\pi \right)   g^+_{b_1}(\theta_j -i\theta_{b_1 b_1}^{b_2} ) (\Psi_n)^{b_{k_1} \ldots b_1  \ldots b_{k_n}}(\theta_1, \ldots, \theta_j, \ldots, \theta_n )\label{chi1}\\
& \; + \; \sum_{j=1}^n \sum_{\pmb{k}} \eta^{b_1}_{b_1 b_2}  \eta_{b_1 b_2}^{b_1} \int d\theta_1 \ldots d\theta_n\; \prod_{p=1}^{j-1}S^{b_{k_p} b_1}_{b_1 b_{k_p}}  \left(\theta_j -\theta_p -i\theta_{b_2 b_1}^{b_1} +i\pi\right) \nonumber\\
& \, \; \times f^+_{b_1}(\theta_j -i\theta_{b_2 b_1}^{b_1} +i\pi) \overline{(\Phi_n)^{b_{k_1} \ldots b_2 \ldots b_{k_n}}\left(\theta_1, \ldots, \theta_j,\ldots ,\theta_n\right) }\nonumber\\
 & \, \; \times 
 \prod_{q= j+1}^n S^{b_{k_q} b_1}_{b_1 b_{k_q}}\left(\theta_j -\theta_q -i\theta_{ b_1 b_2}^{b_1} +i\pi  \right) g^+_{b_1}(\theta_j -i\theta_{b_1 b_2}^{b_1}  ) (\Psi_n)^{b_{k_1} \ldots b_2 \ldots b_{k_n}}(\theta_1, \ldots, \theta_j, \ldots, \theta_n )\label{chi3}.
\end{align}
\end{subequations}
Now, the commutator $[\phi'(g),\phi(f)]$ cancels the commutator $[\chi(f), \chi'(g)]$:
more precisely, \eqref{phi1} cancels \eqref{chip1}, \eqref{phi2} cancels \eqref{chi3},
\eqref{phi4} cancels \eqref{chip4}, \eqref{phi3} cancels \eqref{chi1}.
This uses the following properties:
\begin{itemize}
 \item The properties of fusion angles and residues,
such as $\theta_{b_1 b_2}^{b_1} := \theta_{(b_1 b_2)}^{b_1} + \theta_{(b_2 b_1)}^{b_1}$, $\theta_{b_1 b_1}^{\prime b_2} = \pi - \theta^{b_2}_{b_1 b_1}$, $\theta_{b_1 b_2}^{\prime b_1}= \pi - \theta^{b_1}_{b_2 b_1}$, $R'^{b_2}_{b_1 b_1} = - R^{b_2}_{b_1 b_1}$ and $R'^{b_1}_{b_2 b_1} = -R^{b_1}_{b_1 b_2}$.
 \item Eq.~\eqref{Reseta} and $R_{b_1 b_1}^{b_2}, R^{b_1}_{b_1 b_2} \in i\RR_+$, hence $(\eta_{b_1 b_1}^{b_2})^2 = - 2\pi iR_{b_1b_1}^{b_2}$ and $(\eta^{b_1}_{b_1 b_2})^2 = -2\pi iR^{b_1}_{b_1 b_2}$.
 \item  $f^+_{b_1}(\theta + i\pi) = f^-_{b_1}(\theta), g^+_{b_1}(\theta - i\pi) = g^-_{b_1}(\theta)$.
\end{itemize}
Most of these properties are from Section \ref{scattering}.


\begin{flushleft} 
 {\it The commutators $[\chi(f), z'(J_1g^-)]$ and $[z(J_1f^-), \chi'(g)]$}
\end{flushleft}

Using the expressions of $\chi(f)$ and $\chi'(g)$ in \eqref{chiextended} and \eqref{chipextended}, we can also compute these commutators as in \cite{CT16-diag}. Noting that $\eta^{b_2}_{b_1 b_1}, \eta_{b_1 b_2}^{b_1}$ are the only possible non-zero combinations, we find
\begin{align*}
&([\chi(f),z'(J_1g^-)]\Psi_n)^{b_{k_1} \ldots b_{k_{n-1}}}(\theta_1,\cdots,\theta_{n-1})\nonumber\\
&\; =\; \sqrt{n}\, i\eta^{b_1}_{b_1 b_2} \int d\theta'\, g^-_{b_1}(\theta') f^+_{b_1} (\theta' +i\theta_{(b_1 b_2)}^{b_1})
(\Psi_n)^{b_2 b_{k_1} \ldots b_{k_{n-1}}}(\theta' -i\theta_{(b_2 b_1)}^{b_1},\theta_1  \ldots \theta_{n-1})\nonumber\\
&\qquad \times  \left( \prod_{j=1}^{n-1}S^{b_{k_j} {b_1}}_{{b_1} b_{k_j}}(\theta' -\theta_j )\right),
\end{align*}
which it can be rewritten by shifting $\theta' \rightarrow \theta' +i\theta_{(b_2 b_1)}^{b_1}$ as follows
\begin{align}\label{chizp}
&([\chi(f),z'(J_1g^-)]\Psi_n)^{b_{k_1} \ldots b_{k_{n-1}}}(\theta_1,\cdots,\theta_{n-1})\nonumber\\
&\; =\; \sqrt{n} \,  i\eta^{{b_1}}_{b_1 b_2} \int d\theta'\, g^-_{b_1}(\theta' +i\theta_{(b_2 b_1)}^{b_1}) f^+_{b_1} (\theta' +i\theta_{b_1 b_2}^{b_1})
(\Psi_n)^{b_2 b_{k_1} \ldots b_{k_{n-1}}}(\theta' ,\theta_1  \ldots \theta_{n-1})\nonumber\\
&\qquad \times \left( \prod_{j=1}^{n-1}S^{b_{k_j} {b_1}}_{b_1 b_{k_j}}(\theta' +i\theta_{(b_2 b_1)}^{b_1} -\theta_j )\right).
\end{align}
For the shift in $\theta'$, as it is based on an application of the Cauchy Theorem,
it uses the analyticity and decay properties of $f^+, g^-$ at infinity in the strip,
\cite[Lemma B.2]{CT15-1} and the fact that the poles of the $S$-factors in the product
above are cancelled by the zeros of the vector $\Psi_n  \in \D_0$. More precisely, for $b_{k_j} = b_1$, $S_{b_1b_1}^{b_1b_1}(\zeta)$ has a pole at $\zeta = i\pi - i\theta_{b_1b_1}^{b_2} = i\pi(1-\nu)$. Noting that
$\pi(1-\nu) < \theta_{(b_2b_1)}^{b_1} = \frac{\pi \nu}2$ for $\frac23 < \nu < \frac45$, the zero of the factor $C_n$ at $i\pi (1- \frac{3\nu}{2})$ becomes relevant here (see below \eqref{domain}), while the pole at $\zeta = i \theta_{b_1 b_1}^{b_2} = i\pi \nu$ is not reached by the shift by $\frac{i\pi \nu}{2}$ in $\theta'$.
The pole of $S_{b_1b_2}^{b_2b_1}(\zeta)$ at $\zeta = i\frac{\pi \nu}2 =  i\theta_{(b_2 b_1)}^{b_1}$
is cancelled by the zeros of $\Psi_n$ arising from $S$-symmetry
(see the observations above \eqref{domain}), as in this case $b_{k_j} = b_2$, while the pole at $\zeta = i\pi(1 - \frac{\nu}{2})$ is not reached by the shift by $\frac{i\pi \nu}{2}$ in $\theta'$.

This also guarantees the fact that the result is still $L^2$.
Similarly, we have
\begin{align}
&([z(J_1f^-), \chi'(g)]\Psi_n)^{b_{k_1} \ldots b_{k_{n-1}}}(\theta_1,\cdots,\theta_{n-1})\nonumber\\
&\; =\; -\sqrt{n} \, i\eta_{b_1 b_2}^{{b_1}} \int d\theta'\, f^-_{{b_1}}(\theta') g^+_{b_1} (\theta' -i\theta_{(b_1 b_2)}^{b_1})
(\Psi_n)^{b_2 b_{k_1} \ldots b_{k_{n-1}}}(\theta' +i\theta_{(b_2 b_1)}^{b_1},\theta_1  \ldots \theta_{n-1})\nonumber\\
&\qquad \times  \left( \prod_{j=1}^{n-1}S^{b_1 b_{k_j} }_{b_{k_j} b_1}(\theta_j -\theta' +i\theta_{(b_1 b_2)}^{b_1})\right), \nonumber
\end{align}
and by shifting $\theta' \rightarrow \theta' -i\theta_{(b_2 b_1)}^{b_1}$ we can rewrite this expression as
\begin{align}\label{zchip}
&([z(J_1f^-), \chi'(g)]\Psi_n)^{b_{k_1} \ldots b_{k_{n-1}}}(\theta_1,\cdots,\theta_{n-1})\nonumber\\
&\; =\; -\sqrt{n} \,  i\eta_{b_1 b_2}^{b_1} \int d\theta'\, f^-_{b_1}(\theta' -i\theta_{(b_2 b_1)}^{b_1}) g^+_{b_1} (\theta' -i\theta_{b_1 b_2}^{b_1})
(\Psi_n)^{b_2 b_{k_1} \ldots b_{k_{n-1}}}(\theta',\theta_1  \ldots \theta_{n-1})\nonumber\\
&\qquad \times  \left( \prod_{j=1}^{n-1}S^{b_1 b_{k_j} }_{b_{k_j} {b_1}}(\theta_j -\theta' +i\theta_{b_1 b_2}^{b_1})\right) \nonumber\\
&\; =\; -\sqrt{n}\, i\eta_{b_2 b_1}^{b_1} \int d\theta'\, f^+_{b_1}(\theta'+ i\pi -i\theta_{(b_2 b_1)}^{b_1}) g^-_{b_1} (\theta' +i\pi -i\theta_{b_1 b_2}^{b_1})
(\Psi_n)^{b_2 b_{k_1} \ldots b_{k_{n-1}}}(\theta',\theta_1  \ldots \theta_{n-1})\nonumber\\
&\qquad \times  \left( \prod_{j=1}^{n-1}S^{b_1 b_{k_j} }_{b_{k_j} b_1}(\theta' - \theta_j +i\pi-i\theta_{b_1 b_2}^{b_1})\right),
\end{align}
where we used the property of $f^-, g^+$ under $\pi$-translation and \ref{crossing}.
As before, the shift in $\theta'$ is allowed as the poles of the $S$-factors in the product above
are cancelled by the zeros of $\Psi_n \in \D_0$. 
More precisely, $S_{b_2b_1}^{b_1b_2}(\zeta)$ has a pole at $i\frac{\pi\nu} 2$ and this is crossed as $\theta'$ is shifted by $i\pi(1-\frac\nu 2)$, hence the zero of the factor $C_n$ at $i\pi(1 - \nu)$ becomes relevant, while the pole at $\zeta = i\pi(1- \frac{\nu}{2})$  is cancelled by the zeros of $\Psi_n$ arising from $S$-symmetry.
The pole of $S_{b_1b_1}^{b_1b_1}(\zeta)$ at $\zeta = i\pi(1 - \nu)$ is crossed when $\theta'$ is shifted by $i\pi(1- \frac{\nu}{2})$, hence we need the zero of the factor $C_n$ at $i\frac{\pi \nu}{2}$ to compensate it, while the pole at $\zeta = i\pi \nu$ is not reached by the shift.

%
%
The commutators \eqref{chizp} and \eqref{zchip} cancel each other due to the property
$\pi - \theta^{b_1}_{b_1 b_2} = \theta^{b_1}_{(b_2 b_1)}$ (see Eq.~\eqref{prop:angles}).

\begin{flushleft} 
{\it The commutators $[z^\dagger(f^+), \chi'(g)]$ and $[\chi(f), z'^\dagger(g^+)]$}
\end{flushleft}

These commutators  are the adjoints of the previous ones,
therefore, they cancel weakly by the above computations.
\end{proof}

This shows the weak-commutativity property of the fields $\tilde \phi(f)$ and $\tilde \phi'(g)$. While being already a major step towards the construction of the model in the algebraic setting, it would be important to obtain a proof of strong commutativity of these fields in order to construct the corresponding wedge-algebras and to prove the existence of strictly local observables through intersection of a shifted right and left wedge. The proof of strong commutativity is however a hard task because of the subtle domain properties of $\tilde \phi(f)$ as mentioned at the beginning of Sec.~\ref{sec:comm}. We are in fact able to show that $\tilde \phi(f)$ is a symmetric quadratic form on a suitable domain of vectors, but it is not self-adjoint. Therefore, for the proof of strong commutativity, we would need not only to prove existence of self-adjoint extensions of the two fields, but also to select the ones that strongly commute. Some results in this direction are recently available in \cite{Tanimoto15-1, Tanimoto16-1} in the case of scalar S-matrices with bound states (e.g.\! the Bullough-Dodd model), but these techniques are hard to extend to more general S-matrices.

\begin{remark}\label{rm:cdd}
 Our proof depends only on the axioms and properties summarized in Section \ref{scattering}
 and not on the specific expressions of the S-matrix.
 This implies that our construction and the proof of weak commutativity work as well
 if one considers S-matrix such as
 \[
  S\phantom{}^{b_\ell b_k}_{b_k b_\ell}(\zeta) = S_{\text{SG}}\phantom{}^{b_\ell b_k}_{b_k b_\ell}(\zeta)\prod_{j=1}^{N}S_{j,\text{CDD}}\phantom{}^{b_\ell b_k}_{b_k b_\ell}(\zeta),
 \]
 where $S_{j,\text{CDD}}\phantom{}^{b_\ell b_k}_{b_k b_\ell}(\zeta)$ is a factor as
 in \eqref{Sb1} with (possibly different) parameters $\nu_{j,\pm}$, and $N$ is an odd number
 (this is necessary to maintain \ref{positiveresidue}). Therefore, we have abundant candidates for
 integrable QFT with the fusion structure considered in this paper.
\end{remark}

\section{Concluding remarks}\label{concluding}
We have investigated the construction of integrable models with bound states in a series of
two papers \cite{CT15-1, CT16-diag}. In the second paper the construction methods introduced in \cite{CT15-1}
are extended to a class of models with several particle species and ``diagonal'' S-matrices with poles
in the physical strip, which includes the $Z(N)$-Ising model and the affine-Toda field theories as examples.
This construction is based on finding observables localized in unbounded wedge-shaped regions
to avoid infinite series that characterize strictly local operators.
These strictly local observables, with some regularity condition on $S$, should be recovered
by taking intersection of the algebras generated by observables in right and left wedges (c.f.\! \cite{Lechner08, AL16}).

Here we considered a model which arises as a deformation of the massless sine-Gordon model 
with a parameter $\nu$ which corresponds to a certain range of the coupling constant,
$\frac{2}{3}<\nu< \frac45$, with an additional CDD factor.
As for the proof of weak wedge-locality, we need only some properties of the S-matrix components,
and there are abundant examples, as we pointed out in Remark \ref{rm:cdd}.
As far as we know, that QFTs with such S-matrices have never appeared in the literature.
It is an interesting problem to find (or exclude) a Lagrangian description of them
(note that the CDD factors appearing here are necessary and our S-matrix cannot be
considered as a perturbation of the sine-Gordon model in the sense of, e.g., \! \cite{SZ16}).
In this respect, let us observe that we could find the sign-adjusting CDD factor
only for the interval $\frac23 < \nu < \frac45$, while $\nu = 1$ corresponds to the (doubled)
Ising model. As there is a gap $\frac45 \le \nu < 1$, this casts doubt that a naive
perturbation argument should work.

The resulting theory describes two breathers $b_1, b_2$ subject to elastic scattering and with the property
that they can also fuse to form a bound state
(the fusion processes are $(b_1 b_1) \rightarrow b_2$, $(b_1 b_2) \rightarrow b_1$ and $(b_2 b_1) \rightarrow b_1$).
This model falls again into the class of  ``diagonal'' S-matrices, and in this sense,
it can be regarded as an extension of the previous techniques investigated in \cite{CT16-diag}.
This fusion table is the same as the restriction of the table of the Thirring model \cite{Smirnov92, BFKZ99} to the breather-breather sector
(note that it is called ``the sine-Gordon model''
in the literature in the form factor programme, e.g.~\cite{BFKZ99}, assuming the equivalence between them).
Yet, the original breather-breather S-matrix of the Thirring model does not satisfy
the positivity of residues (see Section \ref{scattering}), hence cannot be considered
as a separate model.
In this sense, the present paper highlights the really necessary properties of the S-matrix for wedge-locality
and contains a new hint in the construction of interacting quantum field theories in the algebraic framework.

An interesting problem would be an extension of such a construction to integrable models with
``non-diagonal'' S-matrices, e.g.\! the Thirring model \cite{BFKZ99} or $\mathrm{SU}(N)$-invariant S-matrices \cite{BFK08}.
It would be interesting to show that weak wedge-commutativity holds at least for
some of these models. They are currently under investigation.
It should be noted that commutation relations of pointlike fields have not been proved
for these models\footnote{Michael Karowski, private communication.}. Our methods represent a complementary way of proving existence of local observables, which may work if the S-matrix components
concerning elementary particles (solitons in the case of the Thirring model) have only simple poles, yet here
several analytic questions (such as the domains of unbounded operators and the modular nuclearity) must be addressed.

\begin{table}
\centering
\begin{tabular}{|c|c|l|}
 \hline
 Range of $\nu$ & The residue of pole of $S_{11}^{11}$ & Comment\\
 \hline
 $4/5<\nu < 1$ & $-i\RR_+$ & No adjusting CDD factor found \\
 \hline
 $ 2/3<\nu < 4/5$ & $-i\RR_+$ & Adjusting CDD factors found \\
 \hline
 $ 1/2 <\nu < 2/3$ &
$-i\RR_+$ & There are three breathers if one requires the \\
& & maximal analyticity within breathers. \\
& & No adjusting CDD factor found \\
 \hline
 $0<\nu < 1/2$ &
$i\RR_+$ & There is a breather $b_K$ for which \\
  & & $\displaystyle{\res_{\zeta = i\theta_{b_1b_k}^{b_{K+1}}}S_{b_1b_K}^{b_Kb_1}(\zeta) \in -i\RR_+}$ \\
 \hline
\end{tabular}
\caption{Ranges of the coupling constant $\nu$ in the sine-Gordon model}\label{coupling}
\end{table}

As we mentioned in Section~\ref{scattering}, the S-matrix studied in the present paper is a deformation of
the S-matrix of the sine-Gordon model in the range of the coupling constant $\frac{2}{3} <\nu< \frac45$ by a CDD factor.
The reason for the CDD factor is the following: while the fusion table of the breather-breather S-matrix is closed under fusions,
these S-matrix components cannot be considered as a separate model because
the residues of some poles in the physical strip are on $-i\RR_+$ (see comment before Eq.~\eqref{Reseta}),
which is not compatible with our proof. We note that also in the proof of local commutativity theorem
in the form factor programme \cite{Quella99} this property is used, therefore, it must be adjusted in some way.
Varying the range of the coupling constant $\nu$, the situation is as pictured in Table 2.
In particular, as explained in Sec.~\ref{scattering}, for $\frac45<\nu < 1$
there are no values of $\nu_-$ and $\nu_+$ which fulfill the required conditions after Eq.~\eqref{Sb2},
and our simplest form for a CDD factor does not work. For $\frac{1}{2}<\nu< \frac{2}{3}$ there are three breathers in the model
(if we take the maximal analyticity literally),
and both ${S_{\text{SG}}}_{11}^{11}$ and ${S_{\text{SG}}}_{12}^{21}$ have $s$-channel poles with residues in $-i\RR_+$.
We could not find a suitable CDD factor adjusting all the residues.
Finally, in the range $0<\nu< \frac{1}{2}$ there is an increasing number of breathers by maximal analyticity,
and while $\res_{\zeta = i\theta_{b_1b_1}^{b_2}}{S_{\text{SG}}}_{11}^{11}(\zeta) \in i\RR_+$,
there are other S-matrix components whose residues are in $-i\RR_+$.
We could not find a suitable CDD factor for this range as well.

Finally, the domain of the operator $\chi(f)$ is considerably small,
one can not only show that even the one-particle components $\chi_1(f)$ is not self-adjoint, see \cite{Tanimoto15-1},
but the domains of $\chi_n(f)$ must be somehow enlarged compensating the factor $C_n$.
We believe that these domain issues are fundamentally related with the complicated fusion processes of the models,
hence deserve a separate study.

\subsubsection*{Acknowledgements}
We thank Michael Karowski for informing us of the current status of the form factor programme.
Y.T.\! thanks Sabina Alazzawi and Wojciech Dybalski for the discussion on the relations
between Thirring and sine-Gordon models.

Y.T.\! is supported by the JSPS overseas fellowship.

{\small
\def\cprime{$'$} \def\polhk#1{\setbox0=\hbox{#1}{\ooalign{\hidewidth
  \lower1.5ex\hbox{`}\hidewidth\crcr\unhbox0}}}

}

\end{document}